\newcommand{\E}{\mathbb{E}}
\newcommand{\RTT}{\mbox{T}}
\newcommand{\VR}{\mbox{VRTT}}
\newtheorem{proposition}{Proposition}
\newtheorem{theorem}{Theorem}
\newtheorem{corollary}{Corollary}[theorem]
\begin{document}

\title{LAC: Introducing Latency-Aware Caching in Information-Centric Networks}

\author{
\IEEEauthorblockN{
Giovanna Carofiglio\IEEEauthorrefmark{1},
Leonce Mekinda\IEEEauthorrefmark{2},
Luca Muscariello\IEEEauthorrefmark{2}
}
\IEEEauthorblockA{
\IEEEauthorrefmark{1} Cisco Systems,
\IEEEauthorrefmark{2} Orange Labs Networks,
\\
gcarofig@cisco.com, 
firstname.lastname@orange.com}
}

\maketitle

\begin{abstract} 
Latency-minimization is recognized as one of the pillars of
5G network architecture design. Information-Centric 
Networking (ICN) appears a promising candidate technology for
building an agile communication model that reduces latency
through in-network caching. However, no proposal has
developed so far latency-aware cache management mechanisms 
for ICN.
In the paper, we investigate the role of latency awareness on
data delivery performance in ICN and introduce LAC, a new
simple, yet very effective, Latency-Aware Cache management
policy. The designed mechanism leverages in a distributed
fashion local latency observations to decide whether to store
an object in a network cache. The farther the object,
latency-wise, the more favorable the caching decision.
By means of simulations, show that LAC outperforms state of 
the art proposals and  results in a reduction of the content 
mean delivery time and standard deviation by up to 50\%, 
along with a very fast convergence to these figures.
\end{abstract}

\section{Introduction}
Latency minimization, or building for virtual zero latency as 
commonly referred to, is one of the pillars of 5G network 
architecture design and is currently fostering important 
research work in this space.
Inserting cache memories across the communication data path 
between different processing elements has been already 
demonstrated to be a reliable way of improving performance 
by localizing - especially popular - content at network edge 
and so reducing retrieval latency.

Besides other advantageous architectural choices, the 
introduction of in-network caching as a native building block 
of the network design makes Information Centric Networking 
(ICN)\cite{Jacobson:2009:NNC:1658939.1658941}, a promising 5G 
network technology.
In a nutshell, every ICN router potentially manages a cache 
of previously requested objects in order to improve object 
delivery by reducing retrieval path length for frequently 
requested content. In fact, if content is locally available 
in the cache, the router sends it back directly to the 
requester, otherwise it forwards the request (or Interest) 
for the object to the next hop according to name-based 
routing criteria. When the requested object comes back, it is 
stored in the local cache before sending it back to the 
requester. Given cache size limitations, a replacement policy 
is put in place to evict previously stored objects for 
accommodating the newly available ones. To this aim, various 
classical cache replacement policies, not specifically 
ICN-based exist: to cite a few, Least-Recently-Used (LRU), 
Least-Frequently-Used (LFU), First-In-First-Out (FIFO) and 
Random (RND) \cite{DBLP:journals/corr/abs-1202-4880}.
Within the panoply of cache management policies proposed in 
the literature, very few exploit object retrieval latency to 
orchestrate cache decisions, while requiring transport 
protocol modifications \cite{Prob-Cache} or involving 
additional computational 
complexity\cite{Tong:2013:LSS:2505515.2507857}, without 
significant caching performance increase.

Clearly, the constraints imposed by ICN in terms of high 
speed packet processing exclude every complex cache 
management policy. Therefore, we focus in this paper on a 
simple, hence feasible, cache management policy leveraging
not only the objects replacement, but the cache insertion 
criterion,  that we define based on monitored object latency.

The cache management mechanism we propose in this paper, LAC, 
lies upon the following principle: every time an object is 
received from the network, it is stored into the cache with a 
probability proportional to its recently observed retrieval 
latency. As such, it is an add-on laying on top of any cache 
replacement policy and feeding it at a regulated pace. 
In this way, LAC implicitly prioritizes long-to-retrieve 
objects, instead of caching every object regardless. 
The underlying trade off such caching mechanism tackles is 
between a limited cache size and delivery time minimization.
As caching intrinsically aims to relieve the fallouts of 
network distance or traffic congestion, it must be aware of 
both delay factors to efficiently handle the cache size / 
delivery time tradeoff.  Data retrieval latency is a simple, 
locally measurable and consistent metric for revealing either 
haul distance or traffic congestion.
More precisely, the contribution of this paper is threefold :

\begin{itemize}
\item We design LAC, a randomized dynamic cache management 
policy leveraging in-network retrieval latency for cache 
insertion. The locally monitored metric is the time elapsing 
at a given node between request forwarding and corresponding 
packet reception.

\item We provide a preliminary analysis of LAC
to prove its superior performance over a symmetric p-LRU 
(probabilistic LRU) policy using 
the same probability $p$ for Move-To-Front (MTF) operation in 
case of hit and miss events. 

\item We evaluate LAC performance
by means of packet level simulations carried out with our ICN simulator CCNPL-Sim (\url{http://systemx.enst.fr/ccnpl-sim}). 
\end{itemize}

The rest of the paper is structured in the following way. We 
review the state of the art and perceived limitations in 
\S\ref{sec:related_work}. The problem formulation of 
latency-aware caching is reported in 
\S\ref{sec:problem_formulation_design_choices}.
Sec. \S\ref{sec:system_model} gathers analytical results, 
while performance evaluation of our proposal is in 
\S\ref{sec:performance_evaluation}.
Finally, Sec.\ref{sec:conclusion} concludes the paper, by 
giving a glimpse on future activities.

\section{Related Work} \label{sec:related_work}
In the context of ICN research, previous work have considered 
the enhancement of cache mechanisms with the aim of reducing 
caching redundancy over a delivery path. We can distinguish 
two categories of related work: those leveraging content 
placement (e.g. \cite{Congestion-Aware-Gerla}, 
\cite{Time-Shifted-Simon}) as opposed to
those proposing caching mechanisms based on selective 
insertion/replacement in cache (e.g. \cite{Prob-Cache}, 
\cite{ICN2014-Kurose}, \cite{LCN2014}, \cite{Age-Based}).
The first class of approaches has a limited applicability to 
controlled environments like a CDN (Content Delivery 
Network), where topology and content catalog are know a 
priori. Either \cite{Congestion-Aware-Gerla} and 
\cite{Time-Shifted-Simon} deals with video streaming 
in ICN and orchestrate caching and scheduling of requests to 
caches in order to create a cluster of caches with a certain 
number of guaranteed replicas (\cite{Time-Shifted-Simon}).
Unlike these approaches, our work belongs to the second class 
of caching solutions and aims at defining a decentralized 
caching solution that automatically adapts to changes in 
content popularity, network variations etc. by leveraging 
content insertion/replacement operations in cache. 
We share the same objective as \cite{ICN2014-Kurose}, where 
authors propose a congestion-aware caching mechanism
for ICN based on estimation of local congestion,  
of popularity and of position w.r.t. the bottleneck. 
The congestion estimate in this work does not allow to 
differentiate content items in terms of latency like in our 
work. A similar consideration holds for other related 
approaches: the ProbCache work in \cite{Prob-Cache}, which 
utilizes the same cache probability for every content item at 
a given node and the cooperative caching mechanism in 
\cite{LCN2014}-\cite{Age-Based} exploiting overall popularity 
and distance-to-server. Clearly, the rationale behind is the 
same, but the distance-to-server metric does not reflects the 
differences in terms of latency, distance to bottleneck on a 
per-flow basis that our approach takes into account. 
Beyond ICN, caching literature is vast 
\cite{Podlipnig:2003:SWC:954339.954341} and our review 
here does not attempt to be exhaustive, while rather to 
position our contribution w.r.t. closest classical caching 
approaches. Starobinski 
\textit{et al.}
\cite{Starobinski:2001:PMW:570289.570293} and later
Jelenkovi\'{c} 
\textit{et al.}\cite{Jelenkovic:2004:OLC:1024662.1024670} 
describe a cache management mechanisms to optimize the 
storage of variable size documents. In their work, the whole 
Move-To-Front rule is \textit{symmetric}, i.e. applied in 
both hit and miss events (as for LRU,LFU etc.)
while our approach, instead, may be denoted as 
\textit{asymmetric}, since it restricts the stochastic 
decision of MTF to cache miss events, leaving object 
replacement subject to deterministic LRU. 

Furthermore, Starobinski \textit{et al.} model only focuses 
on a single cache, while assuming that every object is 
associated with a fixed retrieval cost, just 
like an intrinsic property. However, in a network of caches, 
an object's retrieval cost 
may vary considerably, depending on its current location and 
on network congestion.

\section{Problem formulation and design choices}
\label{sec:problem_formulation_design_choices}
The problem of improving end-user delivery performance can be 
formulated as the minimization of the overall average 
delivery time $\E[\RTT]$ for all users in the network and 
over all requested objects.

\begin{numcases}{}
\min \sum_{u\in \mathcal{U}} \sum_{k\in\mathcal{K} } 
\sum_{r \in \mathcal{R}_{u,k}} q_{k,u} p_{k,r,u} \E[\RTT_{k,r,u}]  \nonumber \\
\sum_{k} q_{k,u} =1, \qquad \forall u  \\
\sum_{r} p_{k,r,u} =1, \qquad \forall k, r   \label{eq:const2}\\
0 \leq q_{k,u} \leq 1 \qquad \forall k, u \\
0 \leq p_{k,r,u} \leq 1 \qquad \forall k, r, u
\end{numcases}
where $q_{k,u}$ is the normalized request rate of object $k$ from user $u$ (namely, the popularity function at user $u$),
and $p_{k,r,u}$ is the probability to download object $k$ 
from route $r$ and $\E[\RTT_{k,r,u}]$ is the average latency 
to retrieve object $k$ on route $r$. The set of routes 
available at user $u$ is identified by $\mathcal{R}_{u,k}$.

Using the Lagrangian of the problem and imposing the 
Karush-Kuhn-Tucker (KKT) optimality conditions it is easy to 
show that for some constants $c_1$, $c_2$.
\begin{equation}\label{eq:objective}
p_{k,r,u} = \frac{c_1}{q_{k,u} \E[\RTT_{k,r,u}]+c_2}
\end{equation} 
In this paper we look for a distributed algorithm that tries 
to minimize this objective by obtaining $p_{k,r,u}$ without 
any coordination among the nodes and no signaling.
The optimal objective expressed in (\ref{eq:const2}]) can be 
heuristically generalized to every node $n$ in the network by 
substituting $q_{k,u}$ with the local residual popularity 
$q_k(n)$ at node $n$ and with  $\E[\RTT_{k,r,u}]$
the local virtual residual round trip time for object $k$ on 
route $r$, denoted by $\E[\VR_{k,r}]$. Hence we set the 
probability to store an object $k$ at a given node $n$, 
proportional to the popularity and latency locally observed 
at noted $n$. It is left to future work to prove that this 
distributed heuristic is actually optimal.
The intuition behind eq.(\ref{eq:objective}) is that user $u$ 
downloads an object $k$ from a remote path $r$ inversely 
proportional to its popularity and retrieval latency.
A globally optimal strategy performed in each node would 
heuristically prefer to cache locally popular content and 
with high retrieval latency.

In this paper we design an heuristic based on the following 
criterion. Thus our general formulation of the probability 
that the caching decision $d_i$ is true i.e. the probability 
to cache the $i^{th}$ requested object considering all encountered object retrieval latencies, is

\begin{equation}
\mathbb{P} \left[ d_i = true \right] \propto \min{\left(\frac{\left( \Delta T_i \right)^\beta }{\left( f\left((\Delta T_j)_{j = 1,2,\dots,i}\right) \right)^\gamma }, 1\right)} 
\end{equation}

where  $\Delta T_j$ is the retrieval latency of the $j^{th}$ 
requested object, $f$ might be, for example, either a mean, 
the median or a maximum function, $\beta$ and $\gamma$ are 
intensity parameters, $\propto$ means  ``is proportional 
to''. The object retrieval latency and the probability of 
caching it are, hereby, made proportional.  Note that the 
caching decision may cumulatively depend on another fixed or 
dynamic factors (such as the outcome of another random 
experiment).
In the following section we analyze the performance of the 
proposed caching system and compare it to mechanisms existing 
in the literature under a dynamic workload.

\section{Model}
\label{sec:system_model}
The dynamics of the system are complex to capture in a simple 
model due to the tight coupling between delivery performance 
and caching functions: delivery performance is certainly 
affected by network conditions, while clearly network load is 
a result of caching performance and vice-versa. 

In this section, we first introduce modeling assumptions 
(Sec.\ref{sec:ass}), then proceed in two steps: (i) we tackle 
the single cache case, developing analytically some performance bounds, (ii) we leverage such analysis to 
provide an insight on the network of caches case.

\subsection{Assumptions}\label{sec:ass}
The purpose of this model is to identify the added value of the latency-aware stochastic 
decision in outperforming existing alternatives.
In this context, we consider the smallest set of assumptions to have a simple and feasible 
analytical representation.

\begin{itemize}
\item{Zipf-like popularity:}
We assume that object popularity follows a generalized Zipf 
law. Thus let $q(k)$ be the popularity object with rank $k$:
$ q(k) = \frac{c}{k^\alpha}$
with  $\frac{1}{c} = \sum \limits_i i^{-\alpha}$ and 
$\alpha > 0$. This assumption is widely accepted in the 
literature \cite{breslau:web} 
\cite{Mitra:2011:CWV:1961659.1961662}.

\item{Poisson requests:}
We assume that clients request objects according to a Poisson 
distribution with rate $\lambda$, similarly to 
\cite{Carofiglio:2013:PBS:2542828.2542992}

\item{Independent Reference Model:}
Temporal correlation between object requests, though 
neglected here like in 
\cite{Starobinski:2001:PMW:570289.570293} and 
\cite{Fricker:2012:VAA:2414276.2414286}, is foreseen in 
future extensions of this work.

\item{LRU replacement policy:} We focus on the widely adopted 
LRU replacement policy whose common implementation consists 
in moving the most recently served object to 
the front of a list. This allows to study Move-To-Front 
algorithm as an LRU scheme 
\cite{Jelenkovic:2004:OLC:1024662.1024670}.

\item{Same object size:} For the sake of simplicity, we 
assume that, like in \cite{Che:2006:HWC:2312147.2313846}, all 
retrieved objects have the same size. The model will later be 
improved to encompass more fine-grained features such as 
variable object size. We aim to calculate two metrics that, 
we think, give an insight  of a caching system asymptotic 
behavior: the steady-state miss probability  and mean 
delivery time per popularity rank.
\end{itemize}

Refer to Table \ref{tab:symbols} the notation used throughout 
the paper.

\begin{table}[htb!]
\begin{footnotesize}
\centering
\begin{tabular}{|l||p{7cm}|}
\hline
$x$ & Local cache size in number of objects\\
\hline
$\lambda_k$ & Request rate of rank-$k$ objects. Under Poisson arrivals, $\lambda_k = \lambda q(k)$\\
\hline
$\varphi_{k,\tau}$ & Probability of receiving at least one request for a rank-$k$ object during $\tau$ seconds\\
\hline
$\pi_{k,t}$ & Local cache miss probability for rank-$k$ objects at time $t$\\
\hline
$p_{k,t}$ & Probability of a positive caching decision for object rank $k$ at time $t$\\
\hline
$\tau_x$ & Characteristic Time threshold for filling a cache of size $x$\\
\hline
\end{tabular}
\caption{Notation.}\label{tab:symbols}
\end{footnotesize}
\end{table}

\subsection{LAC in the single cache model} \label{sec:single_cache}
In this section, we analyze the latency-aware mechanism proposed in this paper
by computing its performance, expressed in terms of the cache miss probability.
The analysis starts from the computation of LAC steady-state per object $k$
miss probability, $\pi_k$, $\forall k$. LAC is referred to as $p_i^{asym}$-LRU 
as opposed to systems where either the insertion is determined by a constant 
probability $p$ or insertion/replacement operations are symmetrically driven 
by the same probability. Indeed, recall that LAC asymmetry stems from the fact 
that the insertion is probabilistically determined on a per-object basis by the 
monitored residual latency, while using LRU replacement.

\begin{proposition} \label{prop:MTF_arrival_process}
In a LRU cache with insertion probability $p_{k,t}$, 
the move to front probability at time $t$, during the time window $\tau$, 
for object $k$ is given by
\begin{align}
F_k(t, \tau) \triangleq \left((1 - \pi_{k,t}) + \pi_{k,t}p_{k,t} \right)\varphi_{k,\tau} \\
= \left(1 -   (1-p_{k,t})\pi_{k,t} \right)\varphi_{k,\tau}
\end{align}
being  $\varphi_{k,\tau}$ be the probability of receiving at least one request for a 
rank-$k$ object during $\tau$ seconds.
\end{proposition}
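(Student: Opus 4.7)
The plan is to apply the law of total probability by conditioning on two independent events: (i) whether at least one request for object $k$ arrives during the window $\tau$, and (ii) conditional on such an arrival, whether it encounters a cache hit or a cache miss. The factor $\varphi_{k,\tau}$ accounts for (i) directly, since no move-to-front operation can occur unless a request actually reaches the node. Conditioned on a request being served at time $t$, a hit occurs with probability $1 - \pi_{k,t}$ and a miss with probability $\pi_{k,t}$, by the very definition of the per-object miss probability.

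Next I would analyse the two branches separately. On a cache hit, the LRU replacement policy deterministically promotes the requested object to the head of the list, so an MTF event happens with probability $1$. On a cache miss, the retrieved object is inserted (and thereby placed at the front) only if the latency-driven caching decision is positive, which by construction happens with probability $p_{k,t}$. Summing the two contributions gives the per-request MTF probability $(1-\pi_{k,t})\cdot 1 + \pi_{k,t}\cdot p_{k,t}$, and multiplying by $\varphi_{k,\tau}$ yields the first displayed identity. The second one is pure algebra: $(1-\pi_{k,t}) + \pi_{k,t}p_{k,t} = 1 - \pi_{k,t}(1-p_{k,t})$.

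The delicate point that I expect to be the main obstacle is interpreting $F_k(t,\tau)$ cleanly when several requests can arrive in the window. Under the Independent Reference Model and Poisson assumption of Sec.\ref{sec:ass}, the hit/miss outcomes of successive arrivals are exchangeable and the insertion decisions are independent Bernoulli trials; treating $\pi_{k,t}$ and $p_{k,t}$ as quasi-constant on the timescale $\tau$, in the spirit of the characteristic-time approximation of \cite{Che:2006:HWC:2312147.2313846}, lets one equate ``an MTF occurs somewhere in the window'' with ``the first request of the window triggers an MTF''. The factorised product of $\varphi_{k,\tau}$ and the per-request MTF probability then follows. I would flag this approximation explicitly, rather than attempt an exact calculation that would need to unwind the coupling between an insertion early in the window and the conversion of subsequent misses into guaranteed hits.
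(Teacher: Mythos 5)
Your argument is correct and is precisely the decomposition the proposition's own formula encodes: the paper in fact states this proposition without any proof, treating the hit-branch (deterministic LRU promotion, probability $1$) versus miss-branch (insertion with probability $p_{k,t}$) split, gated by the arrival probability $\varphi_{k,\tau}$, as self-evident. Your explicit flagging of the multiple-arrivals-per-window issue and its resolution via the quasi-stationarity of $\pi_{k,t}$ and $p_{k,t}$ on the timescale $\tau$ is a genuine improvement in rigor over the paper, which silently relies on the same characteristic-time reasoning only in the paragraph that follows.
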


The characteristic time (``Che'') approximation 
\cite{Che:2006:HWC:2312147.2313846} states that for a 
sufficiently large cache, the object eviction time is well 
approximated by a unique constant $\tau_x$, being $x$ the 
cache size.  Under this approximation, hence, the miss 
process for a cache under stochastic caching decision, 
$F_k(t, \tau_x) = 1 - \mathbb{P}_{k,t}[ \sharp MTF > x] = (1 
- (1 - p_{k,t})\pi_{k,t}) \varphi_{k,\tau_x}$.
$\sharp MTF$ denotes the number of distinct objects moved to 
the cache front. Upon the assumption that every object gets 
eventually cached at least once over time.
Under this approximation $F_k(t, \tau_x) \approx 1-\pi_{k,t}$ 
which implies $\pi_{k,t} {\approx} \frac{1 - 
\varphi_{k,\tau_x}}{1 - \varphi_{k,\tau_x}(1 - p_{k,t})}$
and generalizes what obtained in 
\cite{DBLP:journals/corr/MartinaGL13} and 
\cite{Bianchi:2013:CBS:2500098.2500106}.
for any inter-arrival time distributions of the request 
process. If we assume that $p_{k,t}$ and $\varphi_{k,\tau}$  
are both ergodic 

\begin{align}
\mathbb{E}[\pi_k] &=  \int_{0}^{1}{\frac{1 - 
\varphi_{k,\tau_x}}{1 - \varphi_{k,\tau_x}(1 - u)} 
d\mathbb{P}[p_k \leq u]} \nonumber \\
&= 1 - \int_0^1 \mathbb{P}[p_k > u] \dfrac{(1 - 
\varphi_{k,\tau_x}) \varphi_{k,\tau_x}}{(1 - 
\varphi_{k,\tau_x}(1 - u))^2}du
\end{align}
If we restrict to a discrete set of positive caching decision 
probabilities,

\begin{equation}
\label{eq:pi_asym_mean}
\mathbb{E}[\pi_k] = \sum \limits_u \mathbb{P}[p_k = u] 
\frac{1 - \varphi_{k,\tau_x}}{1 - \varphi_{k,\tau_x}(1 - u)}.
\end{equation}

\begin{equation}
\label{eq:che}
\tau_x \text{ is the root of }  
\sum \limits_k (1 - \pi_{k}) = x
\end{equation}

That holds from the Che approximation. Note that 
$\varphi_{k,\tau_x} \triangleq 1 - e^{-\lambda_k \tau_x}$ 
under Poisson object arrivals. Note that 
Eq.\eqref{eq:pi_asym_mean} might not be computationally 
tractable. However the following theorem shows that values of 
$p_k$ can be replaced by its mean.

\begin{theorem}
\label{theo:pi_asym_LRU}
If positive caching decision probabilities $p_k$ and 
popularity ranks are deemed independent, and assuming Poisson 
object arrivals,

\begin{equation}
\mathbb{E}[\pi_k] = \frac{e^{-\lambda_k \tau_x}}{ 1 - (1 - 
e^{-\lambda_k \tau_x})(1 - \mathbb{E}[p])}.
\end{equation}
\label{jensen_asym}
\end{theorem}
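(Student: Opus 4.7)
The plan is to obtain the formula from a mean-field application of the Che approximation combined with Proposition~\ref{prop:MTF_arrival_process}, rather than by directly averaging the per-sample closed form in Eq.~\eqref{eq:pi_asym_mean} (which would only yield a Jensen-type inequality, since $u \mapsto (1-\varphi)/(1-\varphi(1-u))$ is convex in $u$). Concretely, I would start from the steady-state Che identity $1 - \pi_{k,t} \approx F_k(t,\tau_x)$ together with the expression $F_k(t,\tau_x) = \bigl(1-(1-p_{k,t})\pi_{k,t}\bigr)\varphi_{k,\tau_x}$ given by Proposition~\ref{prop:MTF_arrival_process}, and take expectations on both sides under the stationary regime.

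The next step is algebraic. Using the hypothesis that $p_k$ is independent of the popularity rank (and hence, under stationarity and ergodicity, of the rank-$k$ miss indicator $\pi_{k,t}$), the cross term factorises: $\mathbb{E}[(1-p_{k,t})\pi_{k,t}] = (1-\mathbb{E}[p])\,\mathbb{E}[\pi_k]$. This yields the scalar fixed-point equation
\begin{equation*}
1 - \mathbb{E}[\pi_k] \;=\; \bigl(1 - (1-\mathbb{E}[p])\,\mathbb{E}[\pi_k]\bigr)\,\varphi_{k,\tau_x},
\end{equation*}
which is linear in $\mathbb{E}[\pi_k]$ and solves to $\mathbb{E}[\pi_k] = (1-\varphi_{k,\tau_x}) / \bigl(1 - \varphi_{k,\tau_x}(1-\mathbb{E}[p])\bigr)$. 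Substituting the Poisson form $\varphi_{k,\tau_x} = 1 - e^{-\lambda_k \tau_x}$ (with $\lambda_k = \lambda q(k)$) into numerator and denominator then produces exactly the displayed expression.

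The genuinely delicate step — and the one the label \texttt{jensen\_asym} alludes to — is justifying the factorisation $\mathbb{E}[p_{k,t}\pi_{k,t}] = \mathbb{E}[p]\,\mathbb{E}[\pi_k]$. On the convex-fraction form of Eq.~\eqref{eq:pi_asym_mean}, naive replacement of $p_k$ by $\mathbb{E}[p]$ is only a lower bound via Jensen. The point of the theorem is that, once independence of the caching decision from the rank (and hence from the local cache occupancy process) is invoked at the level of the MTF event, the convex-substitution step becomes exact up to the Che approximation already assumed. The main obstacle is therefore not the algebra, but spelling out precisely the independence / decorrelation hypotheses that let $\mathbb{E}[p_{k,t}\pi_{k,t}]$ split cleanly; I would state this as a mild extension of the Che mean-field assumption, and verify its plausibility \emph{a posteriori} against the simulations in Sec.~\ref{sec:performance_evaluation}.
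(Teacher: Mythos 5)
Your route is genuinely different from the paper's: you take expectations directly in the Che fixed-point identity $1-\pi_{k,t} = \left(1-(1-p_{k,t})\pi_{k,t}\right)\varphi_{k,\tau_x}$ and factorise the cross term, whereas the paper works from the per-sample closed form, proves convexity of $\pi_{k,t}$ in $p_{k,t}$, applies Jensen to obtain the lower bound \eqref{first_jensen}, then applies Jensen a second time to the characteristic time ($\E[\tau_x(p)] \geq \tau_x(\E[p])$, using convexity of $\tau_x(\cdot)$ obtained via the implicit function theorem on Eq.~\eqref{eq:che}) and argues that the two inequalities counterbalance because $\partial \pi_{k,t}/\partial \tau_x \leq 0$. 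Your algebra is correct and the substitution $\varphi_{k,\tau_x}=1-e^{-\lambda_k \tau_x}$ does recover the stated formula.

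The gap is the factorisation $\E[(1-p_{k,t})\pi_{k,t}]=(1-\E[p])\,\E[\pi_k]$, which does not follow from the theorem's hypothesis. The stated independence is between $p_k$ and the popularity \emph{rank}; what your step needs is zero correlation between $p_{k,t}$ and $\pi_{k,t}$ for the \emph{same} $k$. Under the very Che approximation you invoke, $\pi_{k,t}$ is a deterministic, strictly decreasing function of $p_{k,t}$, so the two are negatively correlated and $\E[(1-p_{k,t})\pi_{k,t}] \geq (1-\E[p])\,\E[\pi_k]$, with strict inequality unless $p_k$ is degenerate. Propagating this through your linear equation yields exactly the one-sided bound \eqref{first_jensen} and nothing more: your ``mild decorrelation extension'' is not mild, it is equivalent to asserting that Jensen's inequality is tight, which is the entire content of the theorem. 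You also hold $\tau_x$ fixed while averaging over $p$, but $\tau_x$ is itself a function of the decision probabilities through Eq.~\eqref{eq:che}; the paper's (admittedly heuristic) counterbalancing argument lives precisely in that dependence, and it is the piece your proof omits. To close the gap you would need either the second Jensen step on $\tau_x(p)$ together with the counterbalancing claim, or an explicit bound on the covariance of $p_{k,t}$ and $\pi_{k,t}$ showing it is negligible.
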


\begin{proof}
The miss probabilities are a convex function of the caching 
decision probabilities as
$$\frac{\partial^2 \pi_{k,t}}{\partial p^2_{k,t}} \equiv 
\frac{2\left(e^{\lambda_k \mathbb{E}[\tau_x(p)]} - 
1\right)^2}{\left(1 + \big(e^{\lambda_k 
\mathbb{E}[\tau_x(p)]} - 1\big) p_{k,t}\right)^3} \geq 0$$
By Jensen's inequality,
\begin{equation}
\pi_{k} \geq \dfrac{e^{-\lambda_k \mathbb{E}[\tau_x(p)]}}{ 1 - (1 - e^{-\lambda_k \mathbb{E}[\tau_x(p)]})(1 - \mathbb{E}[p])}.
\label{first_jensen}
\end{equation}
A proof that also $\tau_x(p_{k,t})$ is convex  follows by 
computing 
$\frac{\partial^2}{\partial p^2_{k,t}}\tau_x(p_{k,t})$ from the first derivative obtained 
in \cite{DBLP:journals/corr/MartinaGL13} Appendix A, which applies the implicit function 
theorem over  Eq.\eqref{eq:che}. 
$\frac{\partial^2}{\partial p^2_{k,t}}\tau_x(p_{k,t})$ is composed of positive terms expect one 
$\frac{\partial^2}{\partial \tau_x \partial p_{k,t}}I(\tau_x,p_{k,t})$ that is negative $\forall k \leq (\lambda \frac{c}{\log(2)}\tau_x)^{\frac{1}{\alpha}}$ and 
surely positive for any sufficiently large number of 
popularity ranks. Hence, by invoking Jensen's inequality once more,  $\mathbb{E}[\tau_x(p)] \geq \tau_x(\mathbb{E}[p])$.
As $\frac{\partial \pi_{k,t}}{\partial \tau_x} \equiv 
\frac{-\lambda_k p_{k,t}e^{\lambda_k 
\tau_x}}{(1+(e^{\lambda_k \tau_x}-1)p_{k,t})^2} \leq 0$, $ 
\tau_x$ tends to counterbalance inequality 
\eqref{first_jensen}, giving

\begin{equation*}
\pi_k \sim \frac{e^{-\lambda_k \tau_x}}{ 1 - (1 - e^{-\lambda_k \tau_x})(1 - \mathbb{E}[p])}
\end{equation*}
\end{proof}

This result is important because it states that caching based on a random $p$ with values $p_{k,t}$ ends up in a steady-state miss probability similar to the one obtained directly using a constant positive decision probability $\overline{p} = \mathbb{E}[p]$.
Fig.\ref{fig:asym} depicts this miss probability over popularity rank as a function of the decision probability. It gives a first intuition that keeping $\overline{p}$ very small decreases drastically the miss probability of high popularity ranks. The number of beneficiary ranks being limited by the cache size (set to 8 files in this instance).
\begin{figure}[!ht]
\centering
%Use following command  to create asym.pdf from Mathematica's asym.eps with all fonts embedded
%C:\Users\JLMG7457\Documents\PhD\Output\Patents\charts>"C:\Program Files\MiKTeX 2.9\miktex\bin\x64\ps2pdf.exe" -dEmbedAllFonts#true -dCompatibilityLevel#1.4 -dPDFSETTINGS#/prepress -dEPSCrop asym.eps asym.pdf
\includegraphics[width=0.4\textwidth]{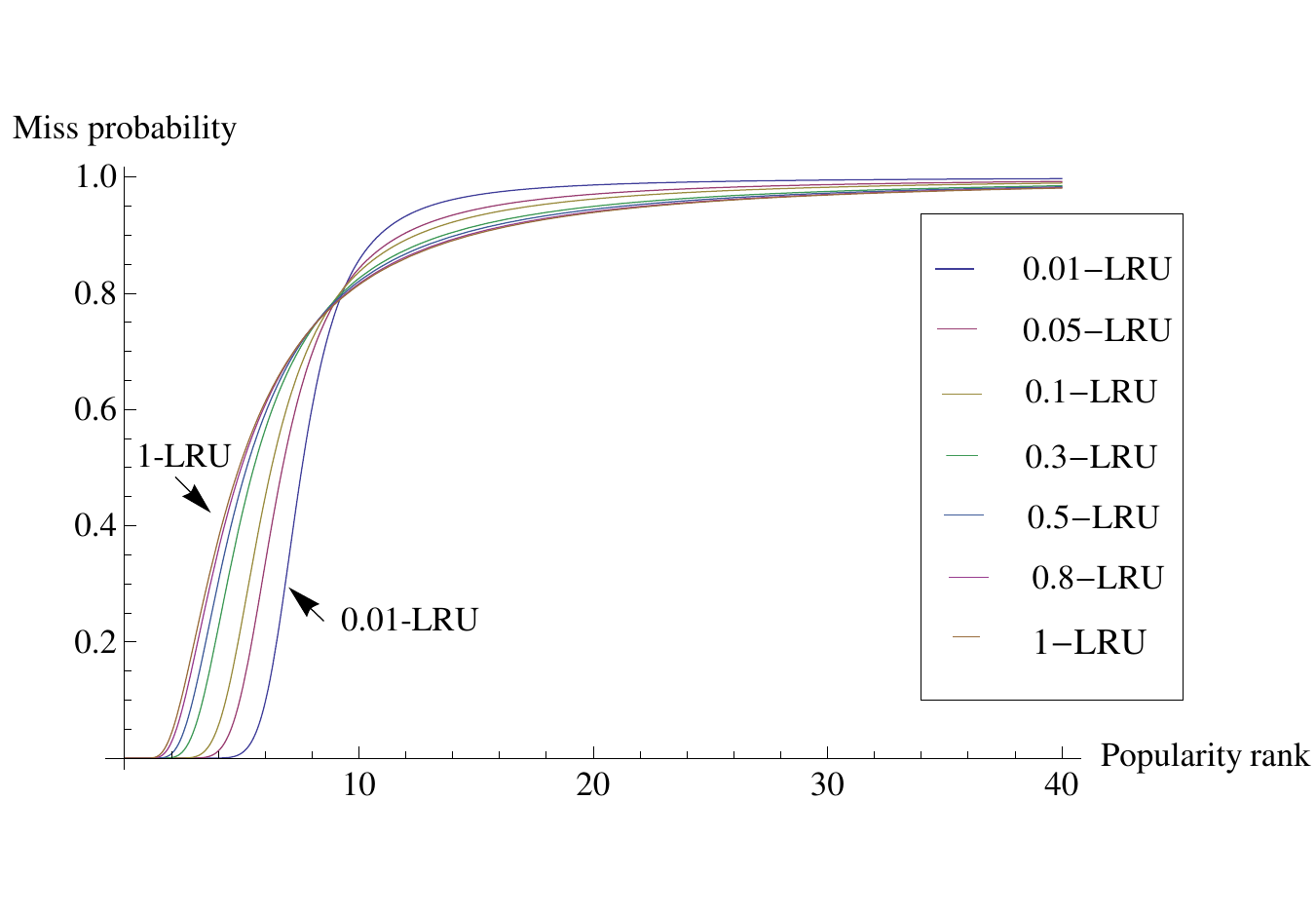}
\caption{$\pi_k$ increases with $\overline{p}$ for the most popular objects.} 
\label{fig:asym} 
\end{figure}
However, the drawback of a constant and small $\overline{p}$ 
is that it postpones considerably the time popular objects 
are first stored in the cache. LRU+LCP suffers from this 
phenomenon because the expected time to enter the cache is
$\frac{1}{\lambda_k \overline{p}}$. Consequently, the overall 
object delivery time converges slowly.

\subsection{$p_i^{sym}$-LRU, an analytical lower bound to $p_i^{asym}$-LRU}
\label{subsec:lower_bound}
Providing a closed-form expression for $p_i^{asym}$-LRU's 
miss probability and its Characteristic Time  $\tau^{asym}_x$
is hard. Instead, we demonstrate its superiority over the 
analytically tractable $p_i^{sym}$-LRU mechanism.
With some loss of generality, $\alpha$ is assumed greater 
than one. Let us consider the symmetric mechanism 
$p_i^{sym}$-LRU where the MTF probabilities are conditioned  
by the same probability $p_{k,t}$. By contrast in 
$p_i^{asym}$-LRU the MTF decision is taken in case of miss 
only.
\begin{theorem} \label{theo:pi_sym_LRU}
$p_i^{sym}$-LRU steady-state miss probability
$\pi^{sym}_{k}  =  \exp\{-\frac{x^{\alpha}}{k^{\alpha}\Gamma (1 - 
\frac{1}{\alpha} )^{\alpha}}\}$
under the assumption that the mean positive caching decision 
probability, $\overline{p}$, is the same for all popularity 
ranks i.e. $\overline{p} = \sum \limits_u \mathbb{P}\left[p_k 
= u \right] u, \forall k$.
\end{theorem}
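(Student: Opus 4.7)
The plan is to reduce the symmetric $p^{sym}$-LRU analysis to a thinned-rate LRU and then apply the Che characteristic-time machinery with a Zipf tail.

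First I would observe that in $p^{sym}$-LRU, every incoming request (hit or miss) triggers an MTF independently with probability $p$. Thinning a Poisson process of rate $\lambda_k$ by $\overline{p}$ yields a Poisson process of rate $\lambda_k \overline{p}$ of effective MTF events. Hence $p^{sym}$-LRU is equivalent, for cache-occupancy purposes, to a plain LRU cache fed by Poisson requests of rate $\lambda_k \overline{p}$. Invoking the Che approximation recalled just before Eq.\eqref{eq:che}, the per-object miss probability becomes
\begin{equation*}
\pi^{sym}_k \;=\; e^{-\lambda_k \overline{p}\, \tau_x^{sym}},
\end{equation*}
where $\tau_x^{sym}$ is the characteristic time solving $\sum_k (1-\pi^{sym}_k) = x$.

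Second, I would plug in the Zipf popularity $\lambda_k = \lambda c / k^\alpha$ into the characteristic time equation and approximate the discrete sum by the integral $\int_1^\infty \bigl(1 - e^{-\lambda c \overline{p}\, \tau_x^{sym}/k^\alpha}\bigr)\, dk$, which is the standard step used in \cite{Che:2006:HWC:2312147.2313846}. Performing the change of variable $u = \lambda c \overline{p}\, \tau_x^{sym}/k^\alpha$, extending the upper limit to $+\infty$ (valid for a sufficiently large catalog since the tail contribution vanishes), and then using the identity
\begin{equation*}
\int_0^\infty (1-e^{-u})\, u^{-1/\alpha - 1}\, du \;=\; \alpha\, \Gamma\!\left(1 - \tfrac{1}{\alpha}\right),
\end{equation*}
(which follows from a single integration by parts against $u^{-1/\alpha}$, requiring $\alpha>1$ as assumed) yields
\begin{equation*}
\bigl(\lambda c \overline{p}\, \tau_x^{sym}\bigr)^{1/\alpha}\, \Gamma\!\left(1-\tfrac{1}{\alpha}\right) \;=\; x.
\end{equation*}

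Third, solving for $\lambda c \overline{p}\, \tau_x^{sym} = x^\alpha / \Gamma(1-1/\alpha)^\alpha$ and substituting back into $\pi^{sym}_k = \exp(-\lambda c \overline{p}\, \tau_x^{sym}/k^\alpha)$ gives exactly the claimed expression. The main obstacle is justifying the Poisson-thinning/Che combination in the symmetric setting, i.e., making explicit that the decision to MTF is taken independently of the hit/miss outcome so that the characteristic-time argument applies verbatim with effective rate $\lambda_k \overline{p}$; the integral approximation and the Gamma identity are routine once $\alpha>1$ is in force.
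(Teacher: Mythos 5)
Your proposal is correct and follows essentially the same route as the paper: thin the Poisson request stream by $\overline{p}$, equate the expected number of distinct objects moved to the front during $\tau_x^{sym}$ to the cache size $x$ to obtain $(\lambda c \overline{p}\,\tau_x^{sym})^{1/\alpha}\Gamma(1-\tfrac{1}{\alpha}) = x$, and substitute back into $e^{-\lambda_k \overline{p}\,\tau_x^{sym}}$. The only cosmetic difference is that you evaluate the Zipf-tail integral explicitly via the change of variable and the Gamma identity, where the paper simply invokes Lemma 5 of Jelenkovi\'c \emph{et al.} for the same asymptotic.
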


\begin{proof} 
%Let $\rho_\phi$ be the mean speed of filling 
%a cache of size $\phi(x)$, $\tau^{sym}_x = 
%\frac{\phi(x)}{\rho_\phi}$. The mean speed of filling a cache 
%is also the speed of distinct object arrival since the LRU 
%cache fills over time (and eventually ejects objects at its 
%tail) due to distinct objects arrival on its front.
Let $\sharp k$ denote the number of times a rank-$k$ object 
is moved to the cache front. The mean number of distinct objects moved 
to the front of the LRU cache over time is
\begin{align}
\sum \limits_k \mathbb{E}\left[\mathds{1}_{\{\sharp k > 0\}}\right] &= \sum \limits_k 1 - e^{-\lambda_k \tau \overline{p}}\approx \int^{+\infty}_1(1 - e^{-\lambda \frac{c}{v^{\alpha}} \overline{p}} )dv \nonumber  \\
&= (\lambda \tau c \overline{p})^{\frac{1}{\alpha}}\Gamma \left( 1 - \frac{1}{\alpha} \right)\nonumber 
\end{align}
In virtue of Lemma 5 of \cite{Jelenkovic:2004:OLC:1024662.1024670}.
Hence, the power of alpha-magnified mean number of distinct objects moved to the front of the LRU cache over time 
\begin{align}
x^{\alpha} = \lambda \tau^{sym}_x c \overline{p}\Gamma \left( 1 - \frac{1}{\alpha} \right)^{\alpha} \Rightarrow
\tau^{sym}_x \approx  
\frac{x^{\alpha}} {\lambda c \overline{p} \Gamma 
\left( 1 - \frac{1}{\alpha} \right)^{\alpha}}\nonumber
\end{align}
The rest follow by using the exponential inter-arrival distribution
for an object with rank $k$.
%$\tau \rho_{\phi: x \mapsto x^{\alpha}} \approx \lambda \tau c \overline{p} \Gamma \left( 1 - \frac{1}{\alpha} \right)^{\alpha} \nonumber$
%$\Rightarrow \rho_{\phi: x \mapsto x^{\alpha}} \approx \lambda c \overline{p} \Gamma \left( 1 - \frac{1}{\alpha} \right)^{\alpha} \nonumber$ $\Rightarrow \tau^{sym}_x \approx  \frac{x^{\alpha}}{\lambda c \overline{p} \Gamma \left( 1 - \frac{1}{\alpha} \right)^{\alpha} }$.
\end{proof}
The closed-form expression of Theorem \ref{theo:pi_sym_LRU} is intrinsically the same as LRU's in \cite{Carofiglio:2013:PBS:2542828.2542992}. This observation yields the next corollary.

\begin{corollary}
If decision probabilities and popularity ranks are deemed independent,
$\pi^{sym}_{k} \overset{L^1}{\rightarrow} \pi^{LRU}_k$
i.e. $p_i^{sym}$-LRU behaves in first-order mean like LRU.
\end{corollary}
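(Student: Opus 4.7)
My plan exploits the closed-form expression just established in Theorem \ref{theo:pi_sym_LRU} and matches it against the known LRU formula. First I would observe that substituting $\tau^{sym}_x = x^{\alpha}/(\lambda c \overline{p}\,\Gamma(1-1/\alpha)^{\alpha})$ into $\pi^{sym}_k = e^{-\lambda_k \tau^{sym}_x \overline{p}}$ makes the two occurrences of $\overline{p}$ cancel exactly, leaving the deterministic expression $\pi^{sym}_k = \exp\{-x^{\alpha}/(k^{\alpha}\Gamma(1-1/\alpha)^{\alpha})\}$, which no longer depends on the distribution of $p$. This is the crucial structural observation and the one the corollary implicitly relies on.

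Next I would invoke the closed form for plain LRU under Zipf traffic with $\alpha>1$ from \cite{Carofiglio:2013:PBS:2542828.2542992}, namely $\pi^{LRU}_k = \exp\{-x^{\alpha}/(k^{\alpha}\Gamma(1-1/\alpha)^{\alpha})\}$: this is exactly the $\overline{p}=1$ specialization of Theorem \ref{theo:pi_sym_LRU}. Hence, within the Che approximation and the heavy-tailed sum-to-integral step used in the proof of Theorem \ref{theo:pi_sym_LRU}, the two formulas coincide term-by-term.

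To conclude the $L^1$ statement I would use the independence assumption between $p_k$ and the rank $k$: because the $\overline{p}$-cancellation makes the right-hand side independent of the realization of $p$, $\pi^{sym}_k$ is (asymptotically) degenerate with $\mathbb{E}[\pi^{sym}_k]=\pi^{LRU}_k$, so that $\mathbb{E}\,|\pi^{sym}_k - \pi^{LRU}_k|\to 0$, which is precisely convergence in first-order mean. The main obstacle I anticipate is not the algebra but the rigour of the approximation: the cancellation is exact only inside the Che/large-rank regime, so one would have to bound the residual error from both the Che approximation and from replacing $\sum_k(1-e^{-\lambda_k \tau \overline{p}})$ by its integral counterpart, and argue that these errors vanish in the appropriate limit (e.g., as $x\to\infty$ with $\alpha>1$ fixed). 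The cleanest route is to piggyback on the error estimates already controlling the LRU formula in \cite{Carofiglio:2013:PBS:2542828.2542992}, so that the symmetric variant inherits the same vanishing bound and the $L^1$ convergence is obtained uniformly over the relevant ranks.
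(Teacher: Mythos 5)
Your proposal is correct and follows essentially the same route as the paper, which proves the corollary simply by observing that the closed form of Theorem~\ref{theo:pi_sym_LRU} coincides with the known LRU expression from \cite{Carofiglio:2013:PBS:2542828.2542992}; your explicit identification of the $\overline{p}$-cancellation in $e^{-\lambda_k \overline{p}\,\tau^{sym}_x}$ is exactly the mechanism the paper leaves implicit. The additional remarks on bounding the Che and sum-to-integral approximation errors go beyond what the paper provides but do not change the argument.
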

$p_i^{asym}$-LRU consequently outperforms $p_i^{sym}$-LRU thanks to its convergence to the Least Frequently Used replacement policy\cite{DBLP:journals/corr/MartinaGL13}. This leads to the next theorem.

\begin{theorem}
\label{theorem_asym_gt_sym}
Let $\eta_{mechanism}$ be the number of most popular objects ``permanently'' accommodated thanks to a caching mechanism,
$\exists \mu \geq 1 : \eta_{p_i^{asym}LRU} \geq \mu \eta_{p_i^{sym}LRU}$
\end{theorem}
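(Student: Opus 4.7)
The plan is to give a quantitative meaning to the informal notion of ``permanent accommodation'' and then compare the resulting rank counts under the two mechanisms. Fix a threshold $\epsilon \in (0,1)$ and declare object $k$ \emph{permanently accommodated} at a given node if its steady-state miss probability satisfies $\pi_k \le \epsilon$. Because, under Zipf popularity, $\lambda_k$ decreases with $k$ and the miss probabilities appearing in both Theorem~\ref{theo:pi_asym_LRU} and Theorem~\ref{theo:pi_sym_LRU} are monotonically increasing in $k$, the set $\{k : \pi_k^{\mathrm{mech}}\le \epsilon\}$ coincides with the top $\eta_{\mathrm{mech}}(\epsilon)$ popularity ranks, so counting reduces to inverting the two closed forms.

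For $p_i^{\mathrm{sym}}$-LRU, inverting the expression given in Theorem~\ref{theo:pi_sym_LRU} I get $\pi_k^{\mathrm{sym}} \le \epsilon$ iff $k \le x /\bigl( \Gamma(1-1/\alpha)\,[\log(1/\epsilon)]^{1/\alpha}\bigr)$, hence
\[
\eta_{p_i^{\mathrm{sym}}\text{LRU}} \;=\; \left\lfloor \frac{x}{\Gamma\!\left(1-\tfrac{1}{\alpha}\right)\,[\log(1/\epsilon)]^{1/\alpha}} \right\rfloor,
\]
which is strictly less than the cache size $x$ as soon as $\log(1/\epsilon) \geq \Gamma(1-1/\alpha)^{-\alpha}$.

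For $p_i^{\mathrm{asym}}$-LRU I use the observation, already invoked just above the statement, that suppressing MTF on hits makes the cache converge to LFU: once a top-$x$ object has entered, it is never pushed out by a hit on a less popular item, so in steady state the $x$ most popular ranks accumulate and their miss probabilities tend to zero, giving $\eta_{p_i^{\mathrm{asym}}\text{LRU}} \to x$. Equivalently, inspecting Theorem~\ref{theo:pi_asym_LRU} shows that the characteristic time $\tau_x^{\mathrm{asym}}$ defined by $\sum_k(1-\pi_k)=x$ must grow to compensate for the $(1-\overline{p})$ thinning of the MTF stream, pushing $\pi_k^{\mathrm{asym}}$ below $\epsilon$ for every small $k$ that was already below threshold for the symmetric mechanism. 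Combining the two bounds,
\[
\mu \;:=\; \frac{\eta_{p_i^{\mathrm{asym}}\text{LRU}}}{\eta_{p_i^{\mathrm{sym}}\text{LRU}}} \;\ge\; \Gamma\!\left(1-\tfrac{1}{\alpha}\right) [\log(1/\epsilon)]^{1/\alpha} \;\ge\; 1
\]
for every $\epsilon$ below that explicit threshold, settling the existence claim.

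The hard part will be the LFU-convergence step. The result in~\cite{DBLP:journals/corr/MartinaGL13} is stated for a \emph{constant} insertion probability, while in LAC $p_{k,t}$ varies with the locally observed latency and is therefore coupled to the global cache state. Turning the heuristic into a clean lower bound on $\eta_{p_i^{\mathrm{asym}}\text{LRU}}$ requires either invoking Theorem~\ref{theo:pi_asym_LRU} to replace $p_{k,t}$ by its mean $\overline{p}$, or constructing a monotone coupling which shows that the latency-driven chain dominates the constant-$\overline{p}$ chain on the event that the top popularity ranks are occupied. Once this is established the remainder of the argument is essentially arithmetic.
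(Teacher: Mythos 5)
Your setup---defining ``permanent accommodation'' by a threshold $\epsilon$ on the steady-state miss probability, using monotonicity of $\pi_k$ in $k$ to identify the permanent set with a prefix of ranks, and inverting the closed form of Theorem~\ref{theo:pi_sym_LRU} to get $\eta_{p_i^{sym}LRU} = x/\big(\Gamma(1-\tfrac{1}{\alpha})(-\log\epsilon)^{1/\alpha}\big)$---is exactly the paper's. The genuine gap is on the asymmetric side: you replace the count $\eta_{p_i^{asym}LRU}$ by the claim that suppressing MTF-on-hit drives the cache to LFU so that $\eta_{p_i^{asym}LRU}\to x$, and you yourself flag that neither of your two proposed repairs of this step is carried out. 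The paper needs no LFU-convergence or coupling argument here: it takes the first of your two suggested routes, i.e.\ it works directly with Theorem~\ref{theo:pi_asym_LRU}, where $p_{k,t}$ has already been replaced by $\mathbb{E}[p]$, and inverts $\mathbb{E}[\pi_k] = e^{-\lambda_k\tau_x}/\big(1-(1-e^{-\lambda_k\tau_x})(1-\mathbb{E}[p])\big)\le\epsilon$. The elementary manipulation $y/(\overline p+y(1-\overline p))\le\epsilon \iff \lambda_k\tau_x\ge\log\!\big(1+\tfrac{1}{\overline p}(\tfrac{1}{\epsilon}-1)\big)$ yields $\eta_{p_i^{asym}LRU}=\big(\lambda c\,\tau_x^{asym}/\log(1+\tfrac{1}{\mathbb{E}[p]}(\tfrac{1}{\epsilon}-1))\big)^{1/\alpha}$; a first-order expansion as $\mathbb{E}[p]\to0$ then gives $\lim_{\mathbb{E}[p]\to0}\eta_{p_i^{asym}LRU}/\eta_{p_i^{sym}LRU}\ge(-\log\epsilon)^{1/\alpha}>1$, which is all the existence of $\mu$ requires.

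Had your LFU step been established, your conclusion would in fact be cleaner and stronger ($\eta_{p_i^{asym}LRU}=x$ exactly, hence $\mu\ge\Gamma(1-\tfrac{1}{\alpha})(-\log\epsilon)^{1/\alpha}$ with no limiting regime in $\mathbb{E}[p]$), and it is consistent with the paper's asymptotics, whose ratio actually diverges as $\mathbb{E}[p]\to0$. But as written the proposal rests on an asserted, not demonstrated, convergence of a state-dependent, latency-coupled insertion process to LFU---precisely the dependence you acknowledge breaks the constant-$p$ result of the cited reference. Replace that step by the algebraic inversion of Theorem~\ref{theo:pi_asym_LRU} above and the argument closes without any stochastic domination.
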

i.e. $p_i^{asym}$-LRU allows to accommodate ``permanently'' $\mu$-times more of the most popular objects than $p_i^{sym}$. 
\begin{proof}\textit{(optional)}
Let the miss probabilities of all ``permanently'' stored objects admit a sufficiently small value $\epsilon$ as upper bound.
Then, 
$\eta_{p_i^{sym}LRU} = \frac{x}{\Gamma(1 - \frac{1}{\alpha}) (-\log \epsilon)^\frac{1}{\alpha}}$
and 
$\eta_{p_i^{asym}LRU} = \left(\frac{ \lambda c \tau^{asym}_x}{\log(1 + \frac{1}{\mathbb{E}[p]}(\frac{1}{\epsilon} - 1))}\right)^\frac{1}{\alpha}$.
Since a first-order Taylor series expansion of $\epsilon$ for $p_i^{asym}$-LRU, when $\mathbb{E}[p] \to 0$, yields
$\eta_{p_i^{asym}LRU} \underset{\mathbb{E}[p] \to 0}{\sim} \frac{x}{\Gamma(1 - \frac{1}{\alpha}) \mathbb{E}[p] \log(1 + \frac{1}{\mathbb{E}[p]}(\frac{1}{\epsilon} - 1))^\frac{1}{\alpha}}$,

$\lim \limits_{\mathbb{E}[p] \to 0} \frac{\eta_{p_i^{asym}LRU}}{\eta_{p_i^{sym}LRU}} \geq  (-\log \epsilon)^\frac{1}{\alpha} > 1$

\end{proof}
Let $LAasym \triangleq$ LRU equipped for latency-aware stochastic caching decision (presented in this paper) 
and $LAsym \triangleq$  LRU modified for latency-aware stochastic MTF decision (Starobinski-Tse-Jelenkovi\'c-Radovanovi\'c's).
\begin{corollary}
\label{corollary_asym_gt_sym}
As a mere special case of Theorem \ref{theorem_asym_gt_sym}, 

$\exists \mu \geq 1 : \eta_{p_i^{LAasym}} \geq \mu \eta_{p_i^{LAsym}}$.
\end{corollary}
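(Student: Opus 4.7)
The plan is to derive Corollary \ref{corollary_asym_gt_sym} as a direct specialization of Theorem \ref{theorem_asym_gt_sym} by identifying $LAasym$ and $LAsym$ as particular instances of $p_i^{asym}$-LRU and $p_i^{sym}$-LRU in which the stochastic caching/MTF decisions are driven by the latency-dependent random variable $p$ introduced in Section \ref{sec:problem_formulation_design_choices}. The argument is essentially an embedding: latency awareness specifies a concrete probability law for $p$, and every step of the proof of Theorem \ref{theorem_asym_gt_sym} only used $\mathbb{E}[p]$ and the fact that $p$ is a valid $[0,1]$-valued random variable, so once the embedding is justified the conclusion transfers.

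First I would make the identification precise. The latency-aware mechanism proposed in this paper sets, upon reception of the $i$-th object, a Bernoulli caching decision with parameter proportional to $\min\bigl((\Delta T_i)^\beta/f(\cdot)^\gamma,\,1\bigr)$. This defines a random variable $p^{LA}$ with a well-defined mean $\mathbb{E}[p^{LA}]\in(0,1]$. When $p^{LA}$ is used only on cache misses one recovers exactly $p_i^{asym}$-LRU with decision probability $p^{LA}$, i.e.\ $LAasym$; when the very same $p^{LA}$ is applied symmetrically on both hits and misses (à la Starobinski--Jelenkovi\'c) one obtains $p_i^{sym}$-LRU with decision probability $p^{LA}$, i.e.\ $LAsym$. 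So the hypothesis of Theorem \ref{theorem_asym_gt_sym} is met with the specific choice $p\equiv p^{LA}$.

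Next I would invoke Theorem \ref{theorem_asym_gt_sym} verbatim. Its proof delivers
\[
\eta_{p_i^{asym}LRU}\underset{\mathbb{E}[p]\to 0}{\sim}\frac{x}{\Gamma(1-\tfrac{1}{\alpha})\,\mathbb{E}[p]\,\log\!\bigl(1+\tfrac{1}{\mathbb{E}[p]}(\tfrac{1}{\epsilon}-1)\bigr)^{1/\alpha}},\qquad
\eta_{p_i^{sym}LRU}=\frac{x}{\Gamma(1-\tfrac{1}{\alpha})(-\log\epsilon)^{1/\alpha}},
\]
and therefore there exists $\mu\geq 1$ with $\eta_{p_i^{asym}LRU}\geq\mu\,\eta_{p_i^{sym}LRU}$. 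Substituting the special choice $p=p^{LA}$ yields $\eta_{p_i^{LAasym}}\geq\mu\,\eta_{p_i^{LAsym}}$, which is the claim.

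The main obstacle, and the reason I would flag this as a ``mere'' rather than vacuous corollary, is verifying the independence-style hypothesis of Theorem \ref{theo:pi_asym_LRU} (used inside the proof of Theorem \ref{theorem_asym_gt_sym}) for the latency-driven $p^{LA}$. In a network of caches, retrieval latency is correlated with popularity rank, because popular objects tend to be cached closer to the requester and hence exhibit smaller $\Delta T$. To discharge this cleanly I would restrict attention to the single-cache regime of Section \ref{sec:single_cache}, where all objects traverse the same upstream path and $\Delta T$ depends essentially on network/queueing noise rather than on $k$; under that regime $p^{LA}$ is effectively independent of the popularity rank and the embedding goes through. A remark would acknowledge that the bound is expected to hold more generally, but that a rigorous multi-cache statement requires the joint analysis postponed to future work.
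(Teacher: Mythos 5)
Your proposal is correct and matches the paper's intent exactly: the paper gives no separate proof for this corollary, simply declaring it a special case of Theorem \ref{theorem_asym_gt_sym} obtained by instantiating the decision probability with the latency-driven one, which is precisely your embedding argument. Your additional caveat about the independence of $p^{LA}$ from popularity rank (and the restriction to the single-cache regime to justify it) is a point the paper glosses over, and is a worthwhile clarification rather than a deviation.
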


This typically means that the performance of LRU caches equipped with latency-aware stochastic caching decision can exceed beyond a given factor $\mu$ that of $p_i^{sym}$-LRU, then LRU studied analytically and extensively in previous works \cite{Carofiglio:2013:PBS:2542828.2542992}.

\subsection{Network of caches}
\label{subsec:network_of_caches}

The analytical characterization of the dynamics of a network of caches, even in a broader scope than ICN, is an active research topic \cite{DBLP:journals/icl/Blefari-MelazziBCD14} \cite{Rosensweig:2010:AMG:1833515.1833684}\cite{journals/corr/JinYKSYHL13} and some closed-form results have been presented, but only for networks of LRU caches\cite{Carofiglio:2013:PBS:2542828.2542992}. 
%We will not detail them herein. From Corollary \ref{corollary_asym_gt_sym}, we merely infer that a network of LRU caches equipped for latency-aware stochastic caching decision outperforms a basic network of LRU caches. We define hereafter some metrics expressing the exacerbated entanglement between caching and communication network performance.

Leaving for future work a thorough analytical characterization of network dynamics under LAC, we explain here the entanglement between latency-aware caching and network performance we need to take into account.

Let focus on a single path, where in-network caching is enabled at each node.
As in \cite{ITC,Carofiglio:2013:PBS:2542828.2542992}, we may denote with $\VR_k$, the Virtual Round Trip Time (VRTT) for any packet of object $k$ and at a given user which we assume to be the first node of this path towards the repository. $\VR_k$ is defined as the weighted sum of user-to-node $i$ round trip time, $R(i)$ times the probability for node $i$ to be the first hitting cache for the request sent by the user, 
$$
\VR_k=\sum_i R(i) \prod_{j<i}p_k(j)(1-p_k(i))
$$
$p_k(i)$ being the miss probability for packets of object $k$ at node $i$.
Now, at every intermediate node $l$ along the path, the measured residual latency can be defined as:
$RVRTT_k(l) = \sum_{i\ge l} R(i) \prod_{j<i}p_k(j)(1-p_k(i))$
Over time, the expectation of the Residual Virtual Round Trip Time for a rank-$k$ 
object at node $n$, $\mathbb{E}[RVRTT_k(n)]$ represent the mean cost of all routes 
to a permanent copy of the object departing from $n$. 

Hence, for the caching node $n$, the probability of a positive LAC decision for rank $k$ at each discrete decision instant $t$ results to be proportional to the monitored average Residual Round Trip Time, \\
$p_{k,t}(n) \propto \min (\frac{(\mbox{RVRTT}_{k,t}(n) )^{\beta}}{\left(f^{(n)}(t-1)\right)^{\gamma}} , 1) \text{, }t > 1$.
Beyond the normalization of the probability to $1$, the specific function we have selected accounts for a normalization of the monitored metric over a function,  $f^{(n)}$ which is meant to indicate the overall latency cache $n$ welcomes. 
We have had successful experience with the instance $f^{(n)} \triangleq \E\left(\cdot \right)$, namely the average
% the family of weighted mean functions.
%Weighted means are convenient as they merge into a non-extremal metric all latencies that have been deemed worthy to trigger caching. They can also smoothly adapt to the evolution of network conditions.
%At every discrete decision instant $t$,
%$f^{(n)}(t) = \frac{\sum \limits_k \sum \limits_{i = 1}^t \omega^{t-i}\mbox{RVRTT}^{(n)}_{k,i} \mathds{1}_{\{miss,k\}}^{(n)} (i) \mathds{1}_{\{decision,k\}}^{(n)}(i)}{\sum \limits_k \sum \limits_{i = 1}^t \omega^{t-i} \mathds{1}_{\{miss,k\}}^{(n)} (i)\mathds{1}_{\{decision,k\}}^{(n)}(i)}$
%
%with $0 < \omega \leq 1$. Obviously, $f^{(n)}$ is the mean in case $\omega=1$. Raising the responsiveness of the cache to sudden network conditions like congestion is as simple as decreasing $\omega$ so that the older a latency, the lighter it weighs.
%Nevertheless, experiments show that $\omega = 1$ leads to unrivalled performance. 
$f^{(n)}(t) \underset{\substack{
		%\omega = 1 
		\\ t \to +\infty}}{\sim} \dfrac{\sum \limits_k  \mathbb{E}[RVRTT_k(n)] \mathbb{E}[p_k(n)]}{ \sum \limits_k \mathbb{E}[p_k(n)]}$.

\section{Performance Evaluation}
\label{sec:performance_evaluation}

We implement and test LAC by means of simulations carried out with the packet-level NDN simulator CCNPL-Sim (\url{https://code.google.com/p/ccnpl-sim/}). We evaluate (i) single cache topologies, then (ii) networks of caches topologies with a single content server on the top and three intermediate layers of caches and a client layer at the bottom.
LAC, our latency-aware LRU denoted as $LAasym$ is tested against two other fully distributed caching management mechanisms: LRU+Leave-Copy-Probabilistically and LRU \cite{1395054}\cite{643}. By fully distributed, we mean mechanisms that do not require the exchange of any specific signalling between caches.

\subsection{Single cache topology}
The following results are achieved in a simulated ICN with a single caching node between the object consumers and the publishing server and with the following parameters. 
\begin{itemize}
	\item Cache sizes are equal to 80KBytes. 
	\item The Poisson process for generating content requests is characterized by a rate of 1 object/s
	\item Objects are requested over a catalog of 20,000 items, according to a  Zipf-like popularity distribution of
	parameter $\alpha = 1.7$. This value of $\alpha$ has been demonstrated realistic \cite{Mitra:2011:CWV:1961659.1961662}. Each file is 10KBytes size.
	\item The two FIFO links from the consumers up to the content publisher have a capacity of 200Kbps and of 30Kbps, respectively. 
	\item Each object conveyed through these links has an average size of 10KBytes, that we also take as fixed packet size.
\end{itemize}
About LAC parameters, caches are equipped for latency-aware stochastic caching decision, with $\beta = \gamma = 5$ to stress the rejection of quickly delivered objects. The function $f$ is the mean latency of all ever-cached objects.
We report the related charts in Fig.\ref{single_topology_results}.
The load $\rho$ of the 30Kbps downlink equals 0.56 when the cache is ruled by LRU, 0.58 under $LAsym$, 0.41 under LRU+LCP and 0.37 under LAC ($LAasym$).

\begin{figure*}[htbp] 
\centering
\subfigure[Miss probability curves validate models.]{\includegraphics[width=0.32\textwidth]{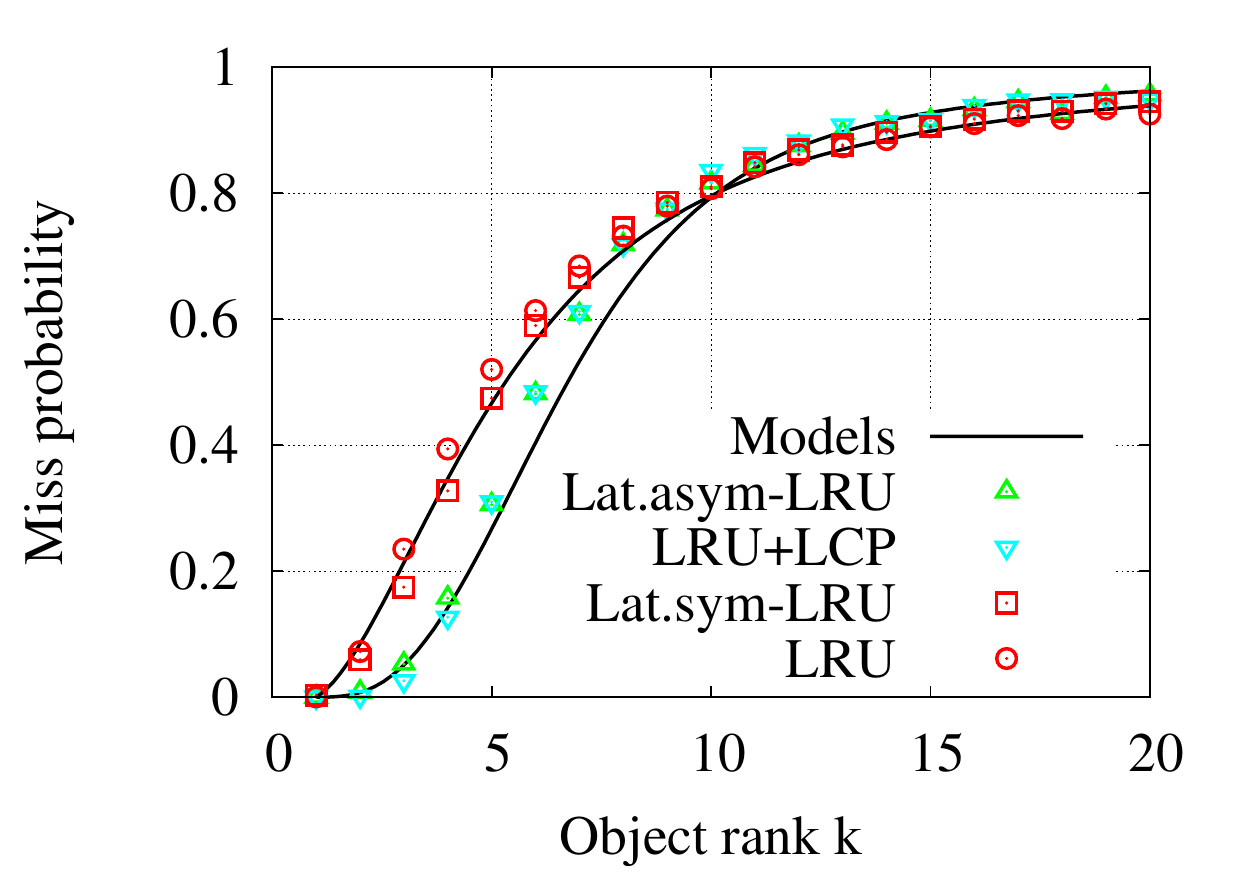}
\label{fig:single_miss_prob}}
\subfigure[Delivery time w.r.t content rank overview (vs LRU)]{\includegraphics[width=0.32\textwidth]{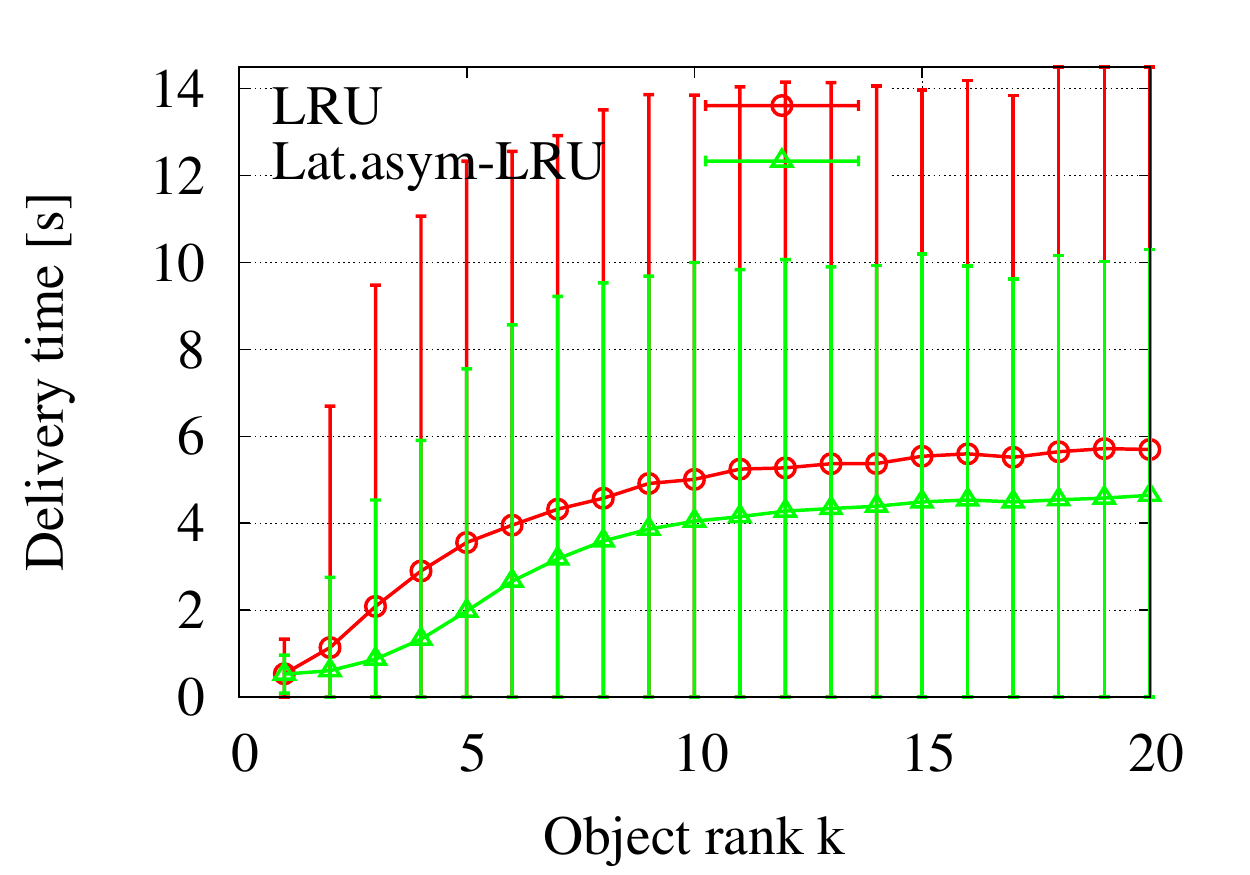}
\label{fig:single_delivery_time_LRU}}
\subfigure[Delivery time w.r.t content rank overview (vs LRU+LCP)]{\includegraphics[width=0.32\textwidth]{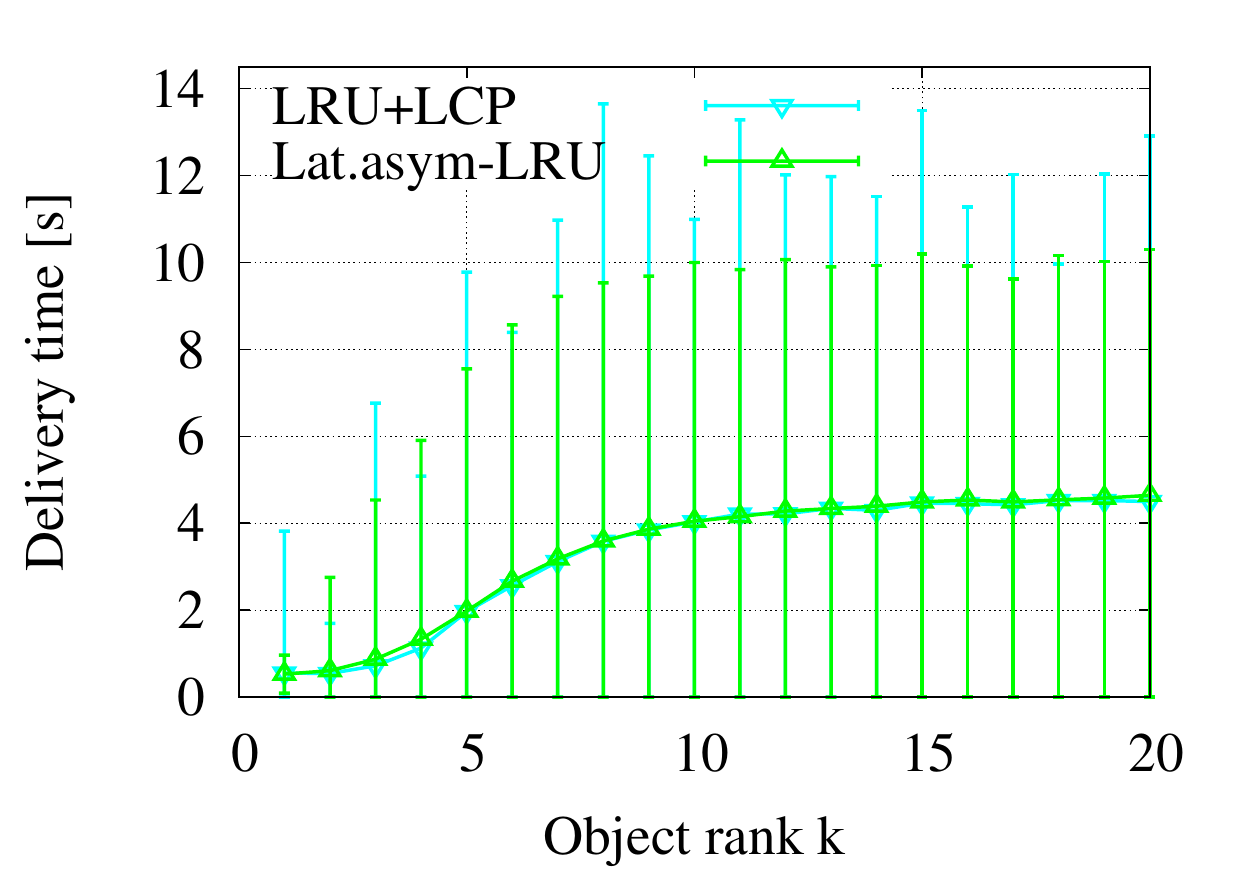}
\label{fig:single_delivery_time_LCP}}
\subfigure[Evolution of the mean delivery time]{\includegraphics[width=0.32\textwidth]{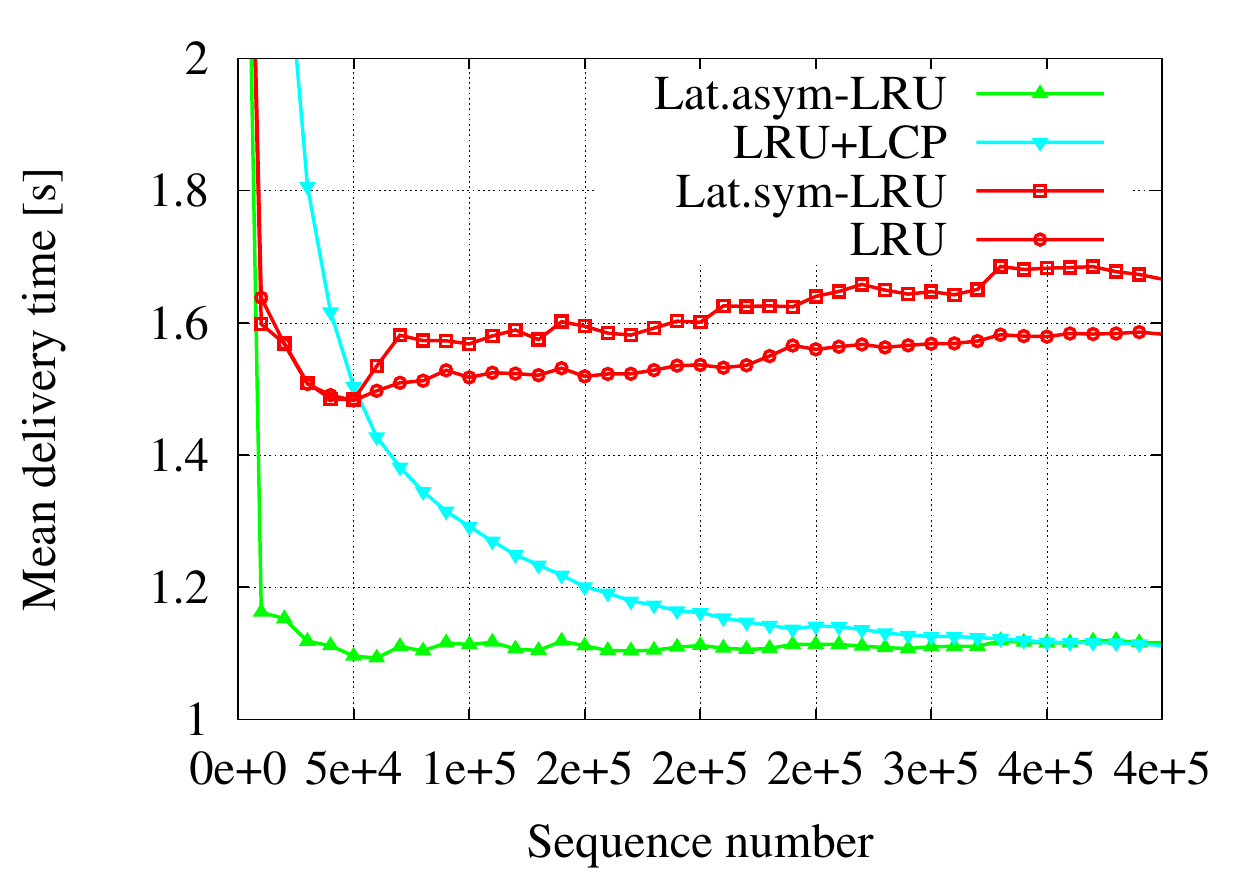}
\label{fig:single_delivery_time_mean_vs_seqnumber}} 
\subfigure[Evolution of the delivery time standard deviation]{\includegraphics[width=0.32\textwidth]{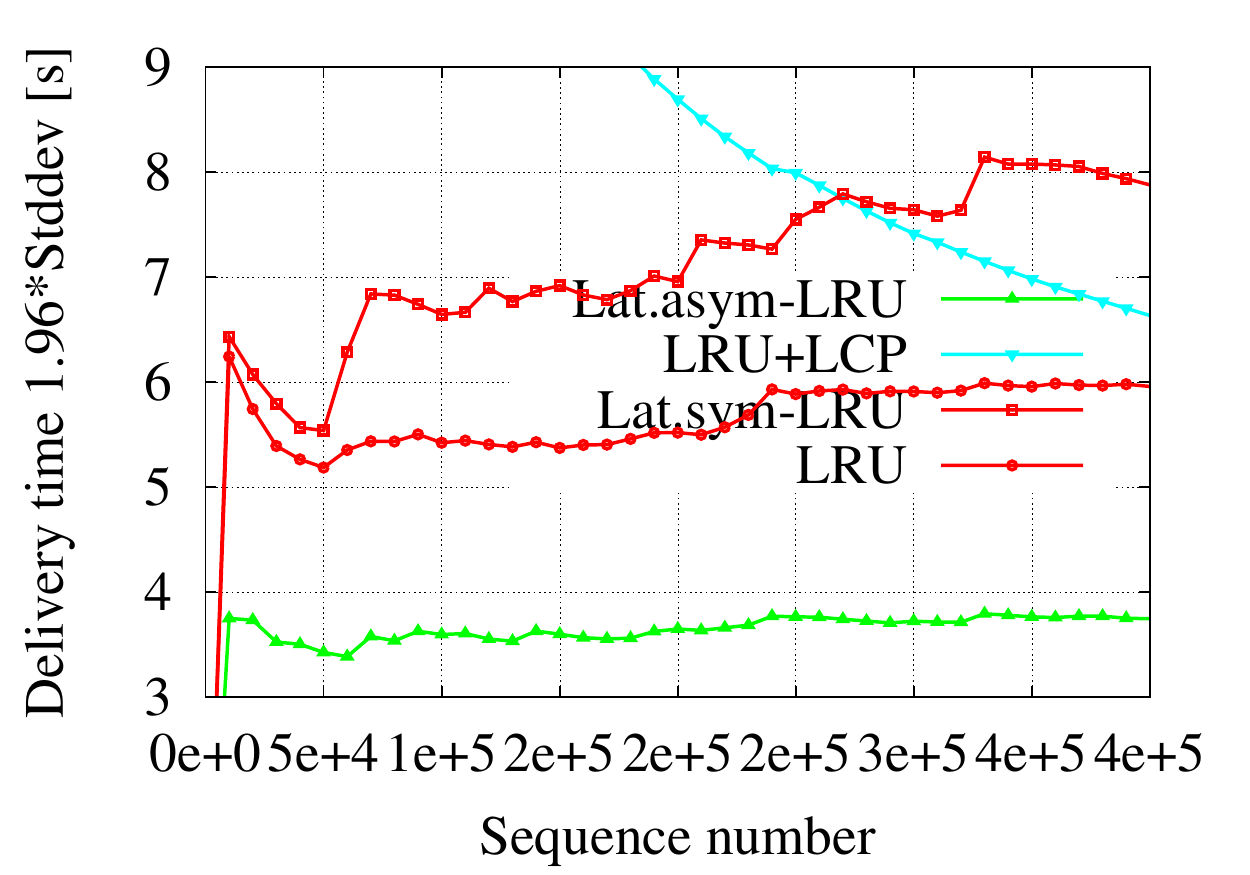}
\label{fig:single_delivery_time_stddev_vs_seqnumber}}
\caption{Single cache topology simulation: LAasym decreases LRU delivery time by 30\% and outperforms LRU+LCP on convergence.}
\label{single_topology_results}
\end{figure*}
A first observation we draw from the plots in Fig.\ref{single_topology_results} is that our LAC proposal, $LAasym$ converges to the same steady state as LRU+LCP, which approximates the optimal LFU behavior.
Note that it this is true in static and hierarchical network of caches with no regeneration (no user requests from intermediate nodes), in general LAC is based on temporal measurements of residual latency, so adapting over time based on the sensed variations in terms of experienced latency. 
Secondly, we observe how much $LAasym$ latency-aware technique reduces both delivery time mean and standard deviation. It is striking to see how quickly they converge, compared to classical LRU+LCP.
The constant decision probability used in LRU+LCP is, indeed, the average of all latency-aware decision probabilities ($p=0.1$) 
and this impacts negatively either the convergence either the system reactivity to temporal variations of latency, as opposed to our LAC proposal. 
Finally, we observe that $LAsym$ and LRU miss probability curves coincide in steady state as predicted in \cite{Jelenkovic:2004:OLC:1024662.1024670}. A symmetric filtering of objects to put in and to remove from the cache has the only effect of slowing down convergence while not modifying the dynamics of the underlying Markov chain. 

\subsection{Network of caches}
\subsubsection{Line topology network}
\begin{figure}[htbp] 
\centering
\includegraphics[width=0.45\textwidth, height=2.5cm]{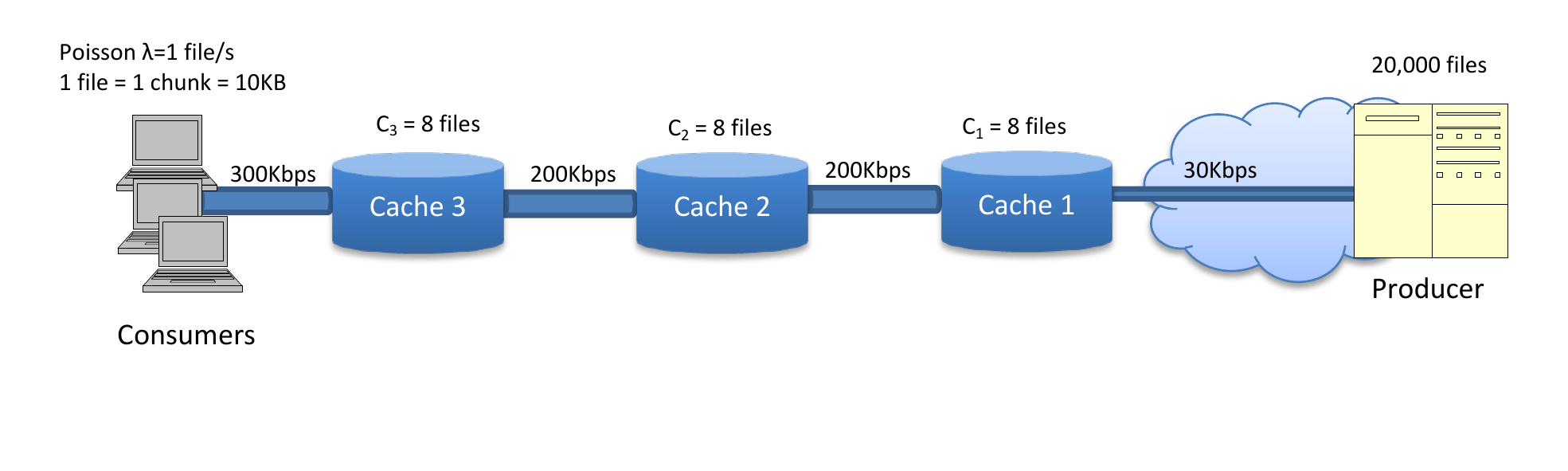}
\vspace{-1cm}
\caption{Simulated line topology.} 
\label{fig:line_topology} 
\end{figure}
We now consider the setting in Fig.\ref{fig:line_topology}, with three caching nodes in-line
between the users and the publishing server. 
The set of parameters we consider is the same as before for what concerns cache sizes, request process and popularity. 
The four links from the consumers up to the
publisher have capacities equal to 300Kbps, 200Kbps, 200Kbps and 30Kbps respectively. 
Under LRU+LCP,  $p = 0.1$ and corresponds to the lowest mean latency-aware caching decision probability, Cache 3's. 
Related results are reported in Fig.\ref{line_topology_results}.
The resulting link load $\rho$ on downlinks from the repository to the users is respectively : (0.5, 0.01, 0.03, 0.27) under LRU,
(0.27, 0.02, 0.02, 0.27) under LRU+LCP and (0.22, 0.04, 0.06, 0.27) under our LAC proposal, $LAasym$. 
\begin{figure*}[ht] 
\centering
\subfigure[LRU and LRU+LCP miss probability]{\includegraphics[width=0.32\textwidth]{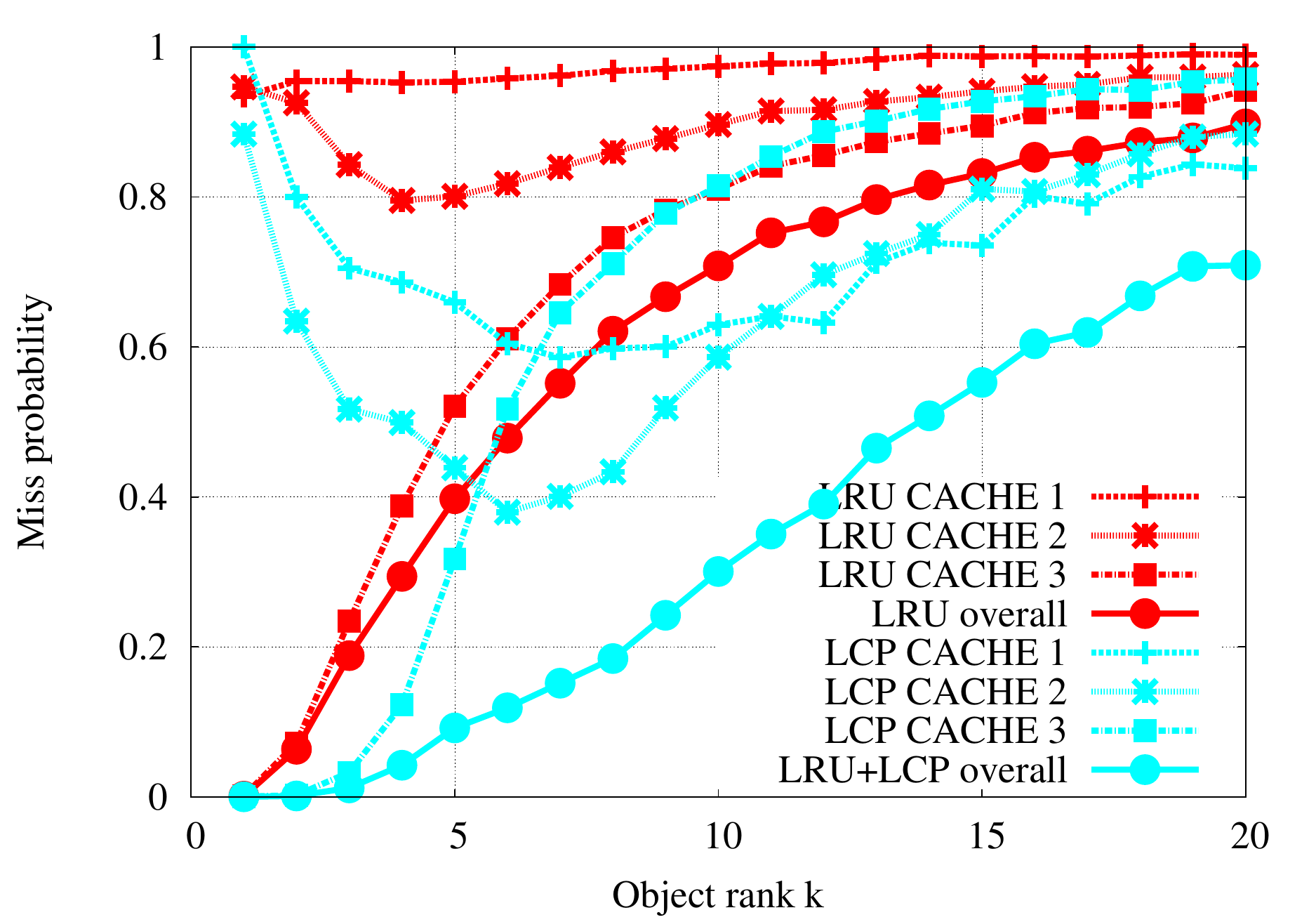}
\label{fig:line_miss_prob_LRU}}
\subfigure[LAasym miss probability]{\includegraphics[width=0.32\textwidth]{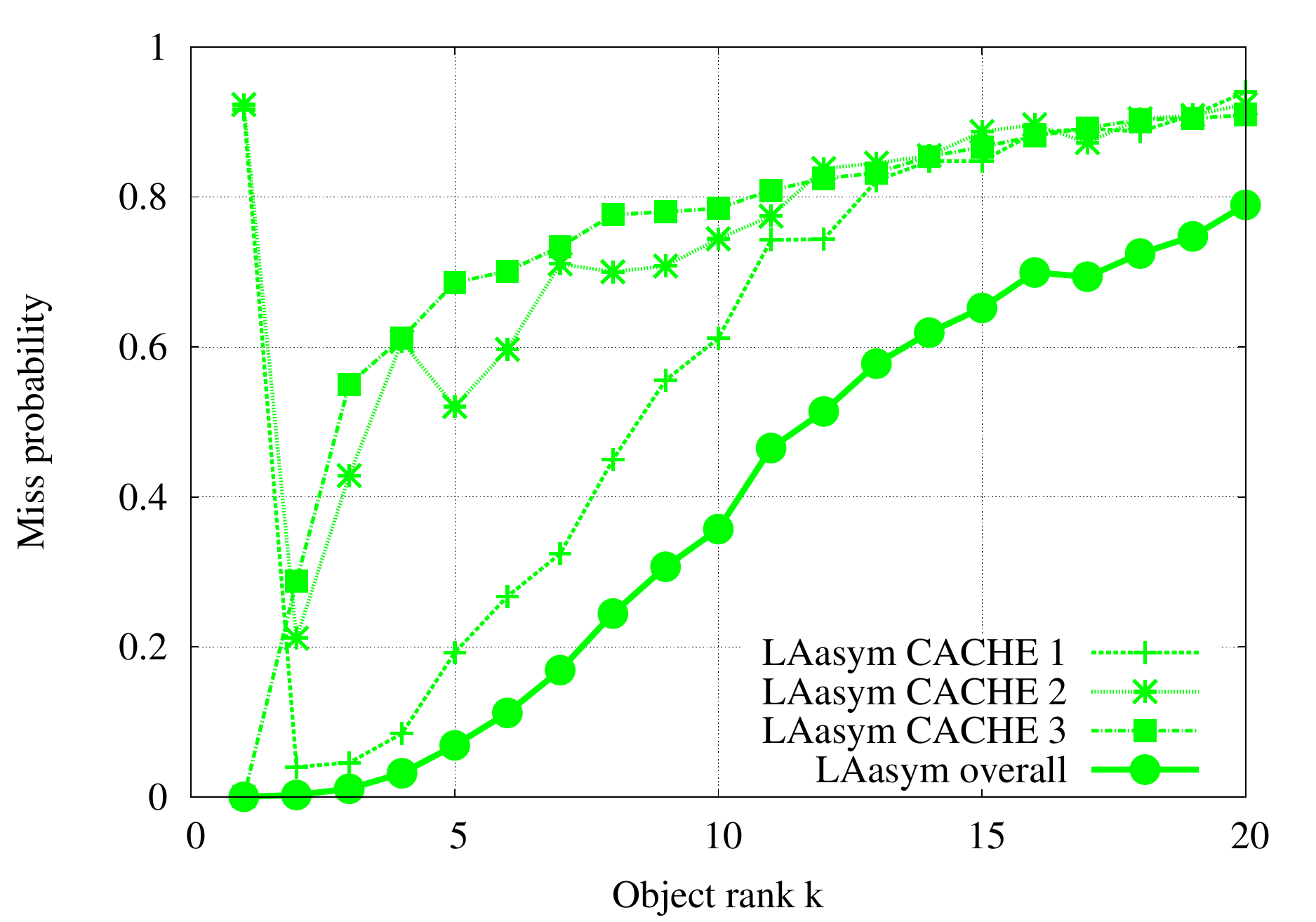}
\label{fig:line_miss_prob_LAasym}}
\subfigure[Delivery time w.r.t content rank overview]{\includegraphics[width=0.32\textwidth]{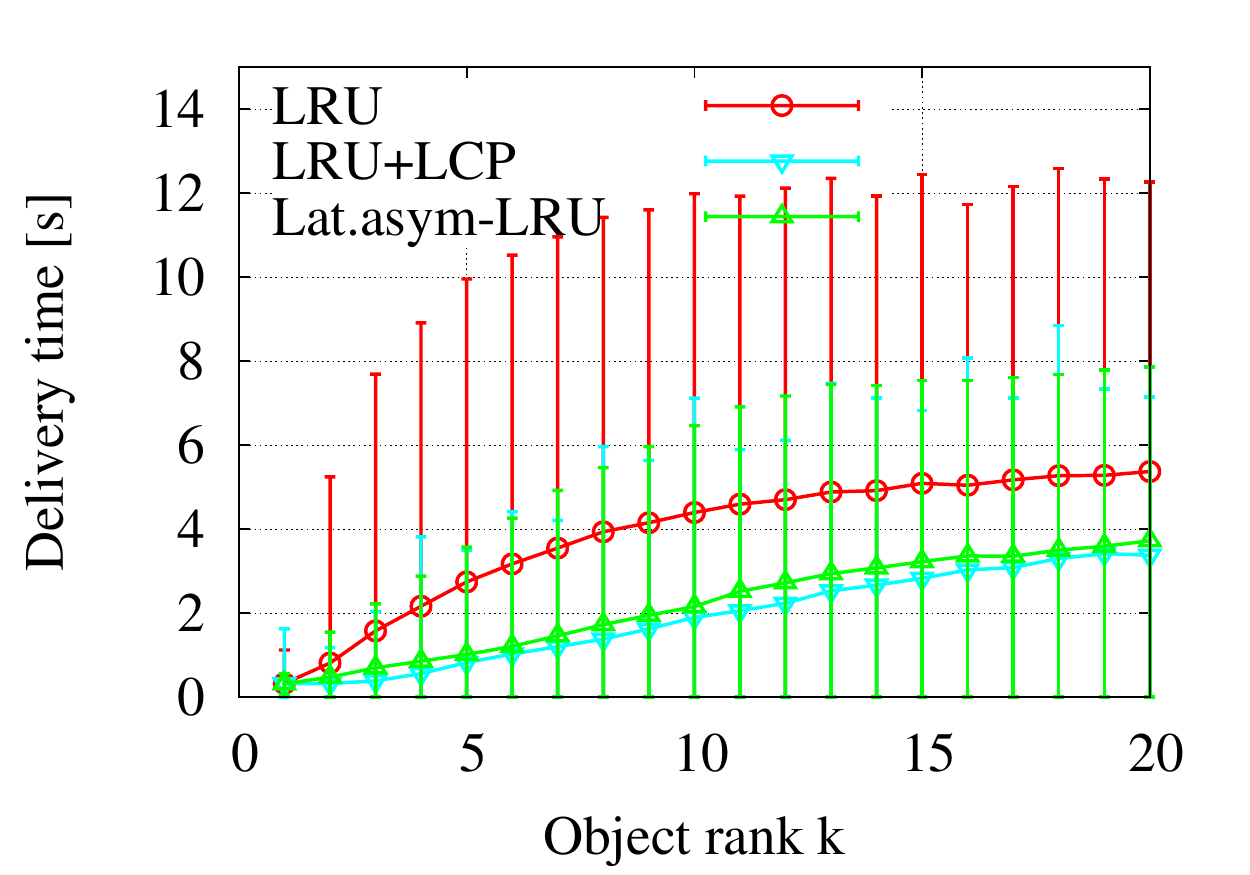}
\label{fig:line_delivery_time}}
\subfigure[Evolution of the mean delivery time]{\includegraphics[width=0.32\textwidth]{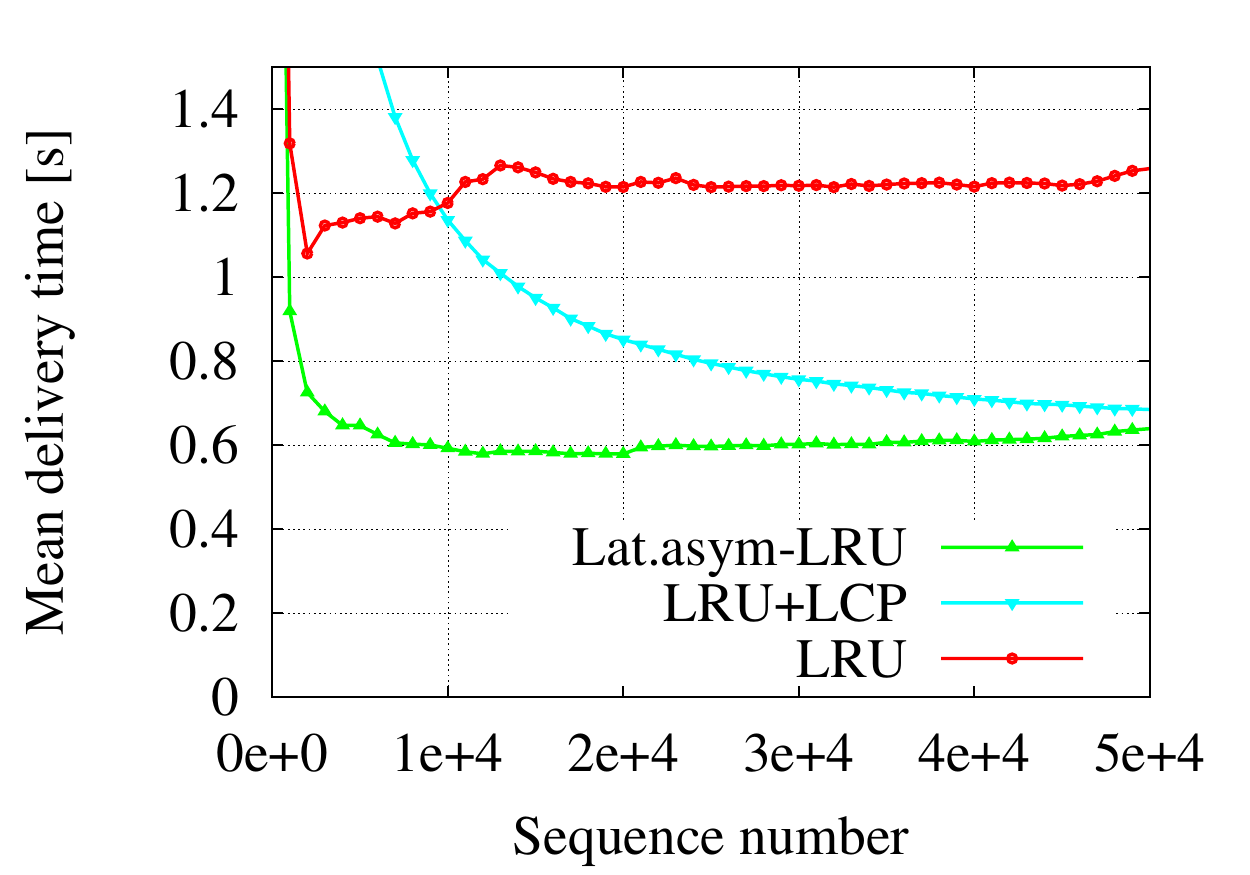}
\label{fig:line_delivery_time_mean_vs_seqnumber}} 
\subfigure[Evolution of the delivery time standard deviation]{\includegraphics[width=0.32\textwidth]{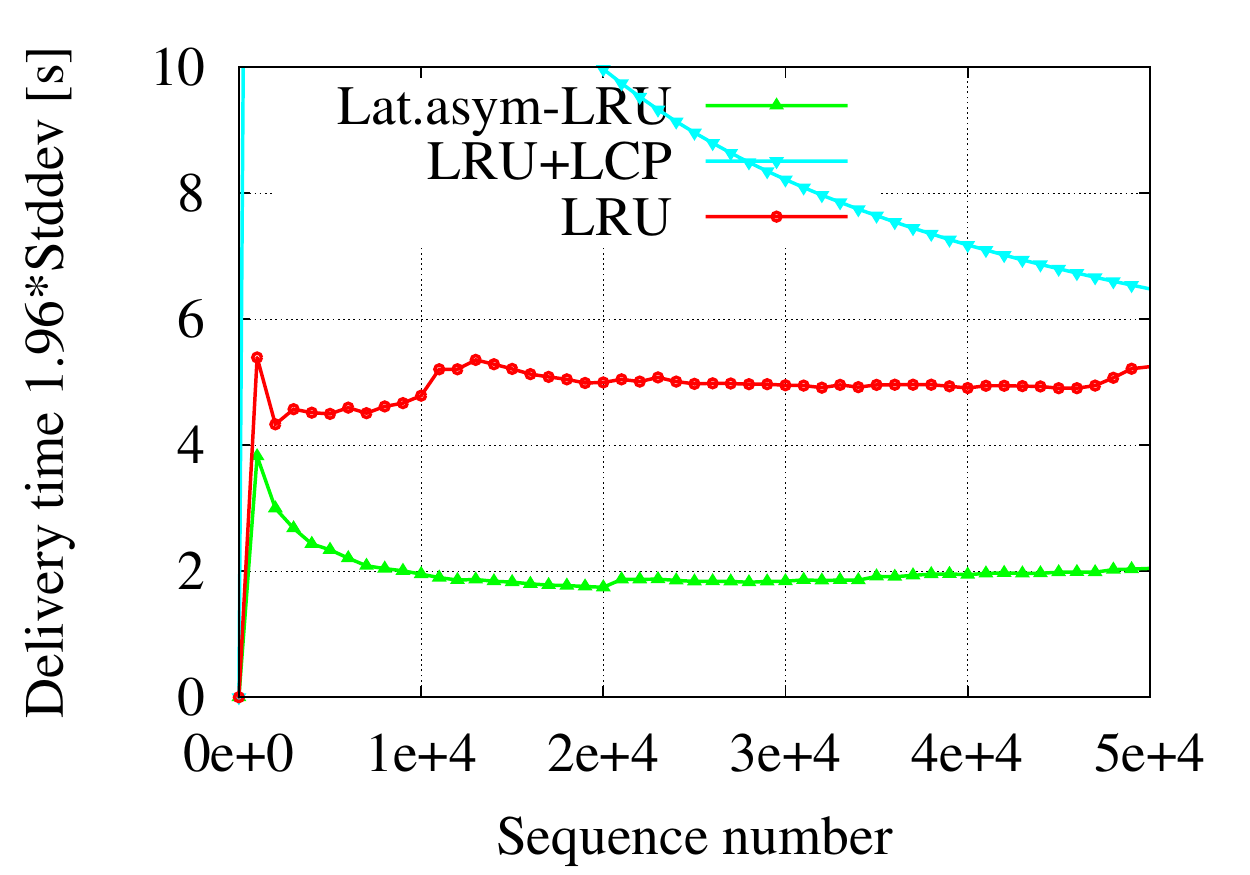}
\label{fig:line_delivery_time_stddev_vs_seqnumber}}
\caption{Line topology simulation: LAasym decreases  LRU delivery time by 50\% and outperforms LRU+LCP on convergence.}
\label{line_topology_results}
\end{figure*} 
Clearly, the expensive traffic to the publisher decreases significantly with $LAasym$, while very little increase can be oberserved on the other links. The tremendous gain in delivery time (50\% of LRU's) can be appreciated in both its first and second moments. Such a delivery time standard deviation decrease plays a central role in stabilizing customers quality of experience.

\subsubsection{Tree topology network}
The next results are those achieved in the ICN setting in Fig.\ref{fig:tree_topology}, spanning a binary tree topology whose
seven caching nodes are spread over three network levels, between the users and
the repository (publishing server). In such configuration, 
\begin{itemize}
	\item Cache sizes are 8MBytes.
	\item Poisson object request rate at the user is 1 object/s.
	\item Object popularity follow a Zipf(1.7) distribution.
	\item Object size is taken equal to 1 MB.
	\item Downlink capacities from the users up to the repository are 
	30Mbps-capable, except the last one toward the repository, which is 9Mbps.
	\item Each packet has an average size of 10KBytes, making every object equal to 100 packets in size.
\end{itemize}

Caches are equipped for LAC decision, with $\beta = \gamma = 3$, with  the function $f$ remaining equal to the mean latency of all ever-cached objects.
Cache 4 is on the first layer (the closest to the consumers), Cache 8 on the second layer and
Cache 10 on the third (the farthest to the users). 
LRU+LCP's $p = 0.03$. That corresponds to $LAasym$'s mean latency-aware caching decision probability. 
\begin{figure}[htbp] 
\centering
\includegraphics[width=0.35\textwidth]{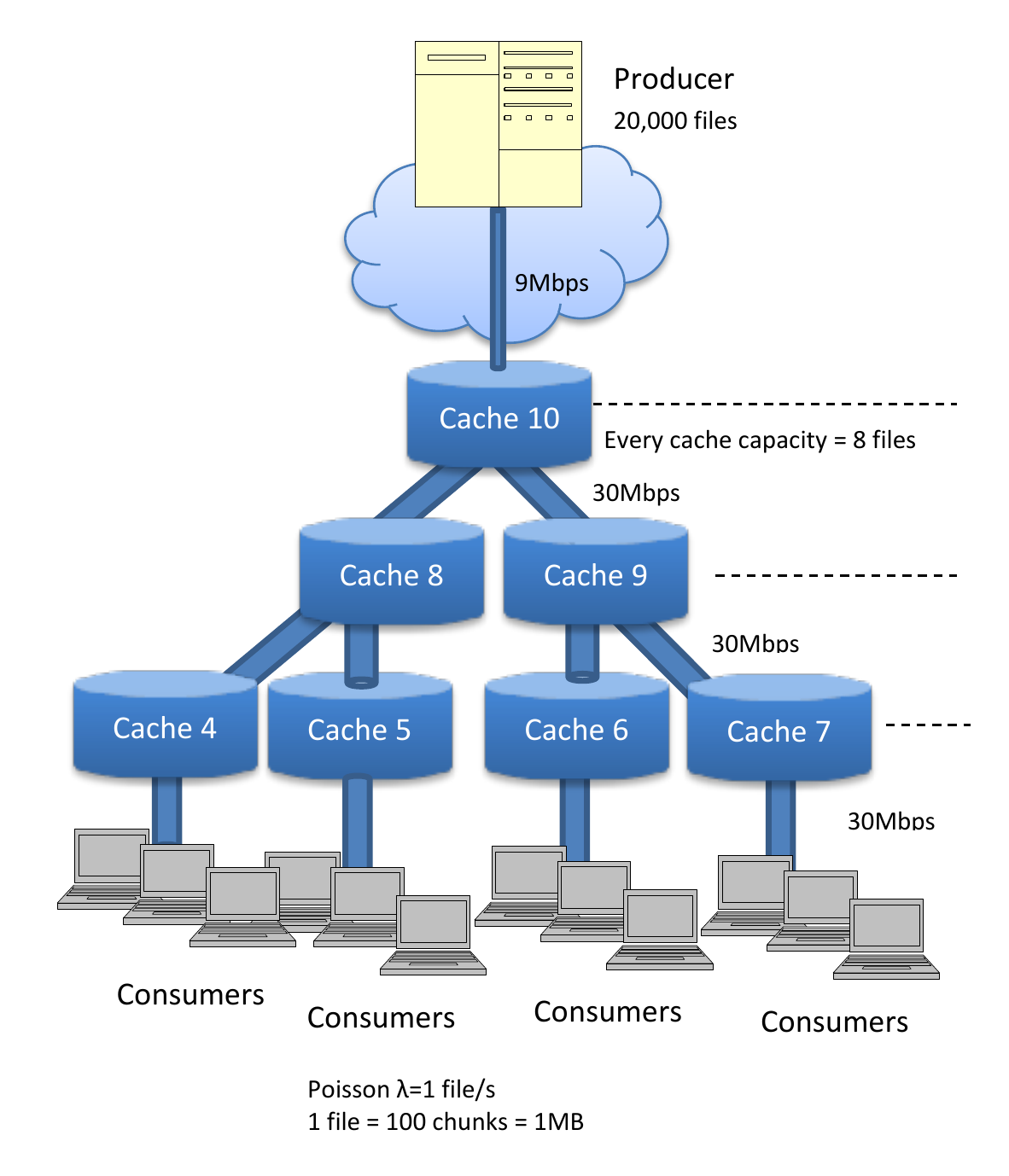}
\caption{Simulated tree topology } 
\label{fig:tree_topology} 
\end{figure}

We report the related charts in Fig.\ref{tree_topology_results}. 
The observed  link load $\rho$ on downlinks from the repository to the users is respectively: (0.7, 0.31, 0.18, 0.6) under LRU, 
(0.7, 0.07, 0.33, 0.6) under LRU+LCP and (0.7, 0.12, 0.23, 0.6) under $LAasym$. Again, our LAC mechanisms allows to lower maximum and average link load over the network. 
%So, even though $LAasym$ reduces by half LRU's load between layer 3 and layer 2 caches, it still rely on caching delegation, which implies some inter-cache traffic.
 
 \begin{figure*}[htbp] 
 \centering
 \subfigure[LRU and LRU+LCP miss probability]{\includegraphics[width=0.32\textwidth]{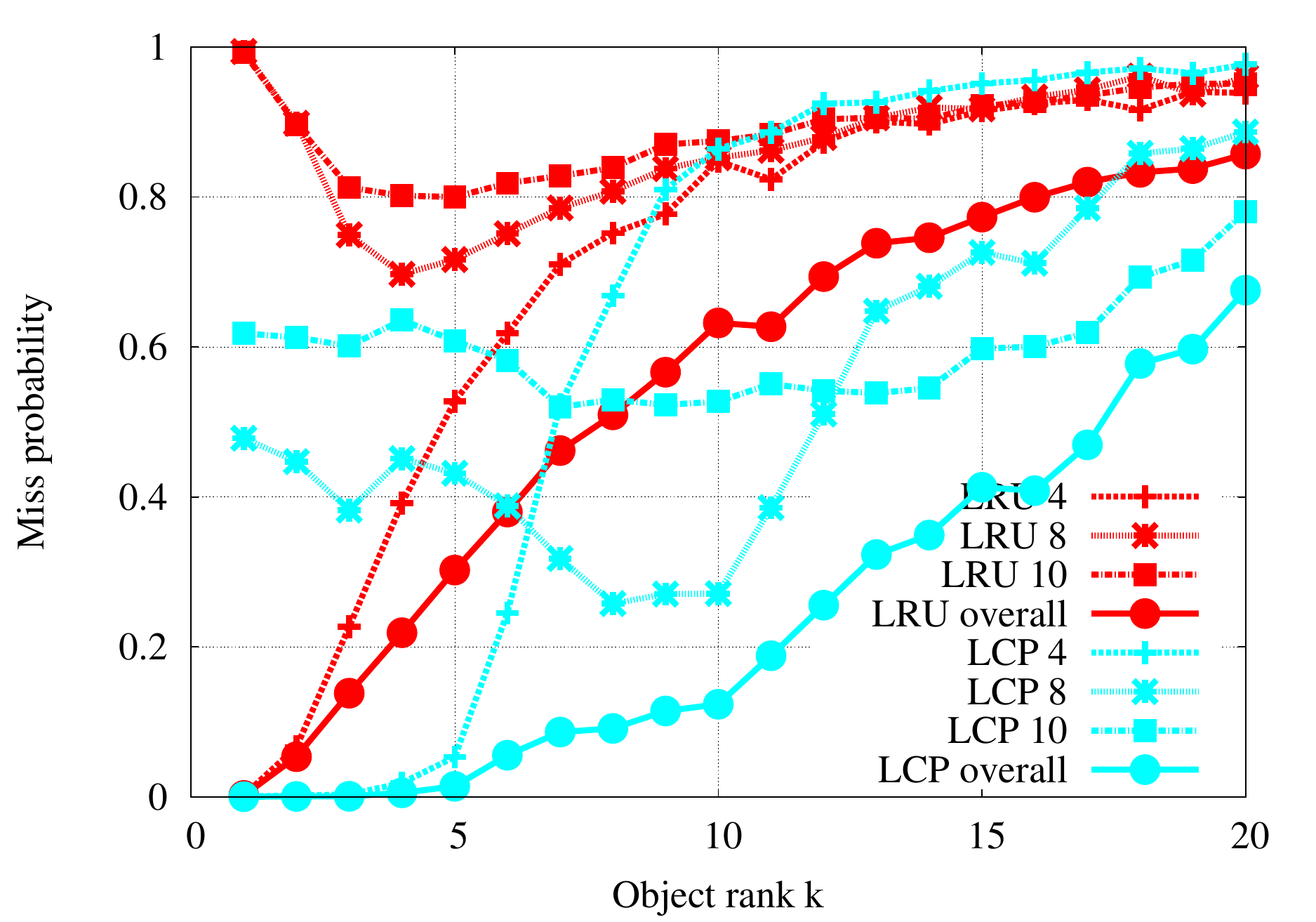}
 \label{fig:tree_miss_prob_LRU}}
 \subfigure[LAasym miss probability]{\includegraphics[width=0.32\textwidth]{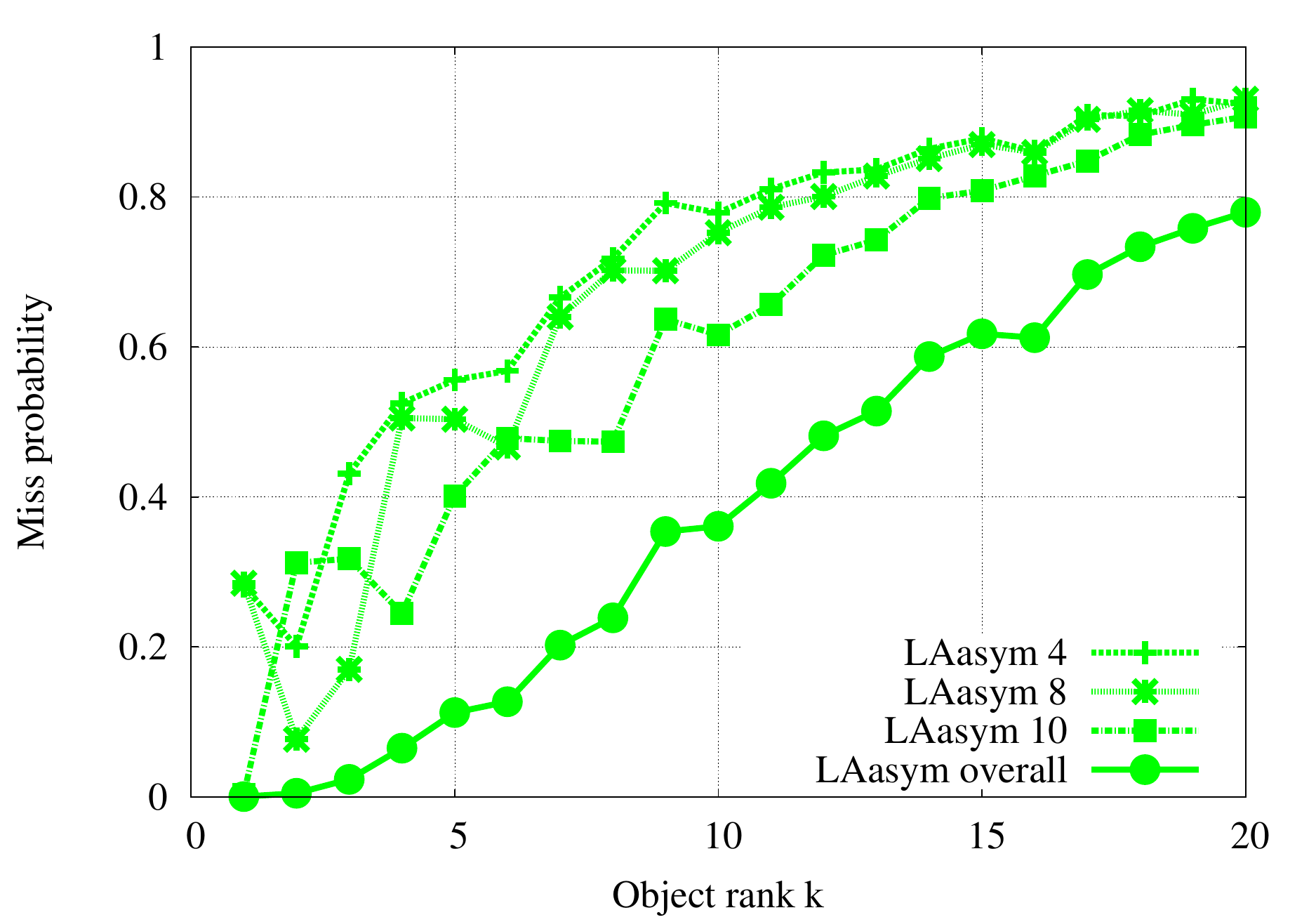}
 \label{fig:tree_miss_prob_LAasym}}
 \subfigure[Delivery time w.r.t content rank overview (vs LRU)]{\includegraphics[width=0.32\textwidth]{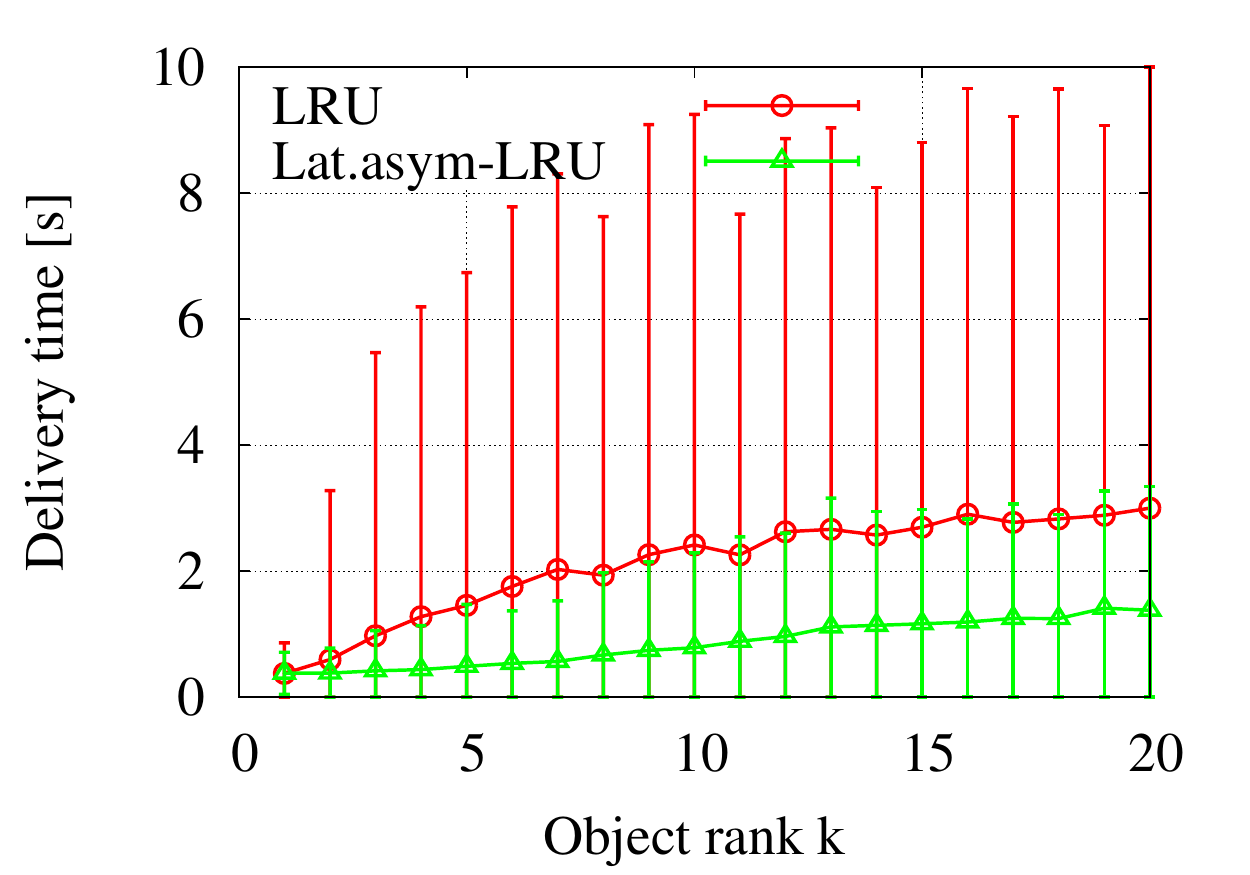}
 \label{fig:tree_delivery_time_LRU}}
 \subfigure[Delivery time w.r.t content rank overview (vs LRU+LCP)]{\includegraphics[width=0.32\textwidth]{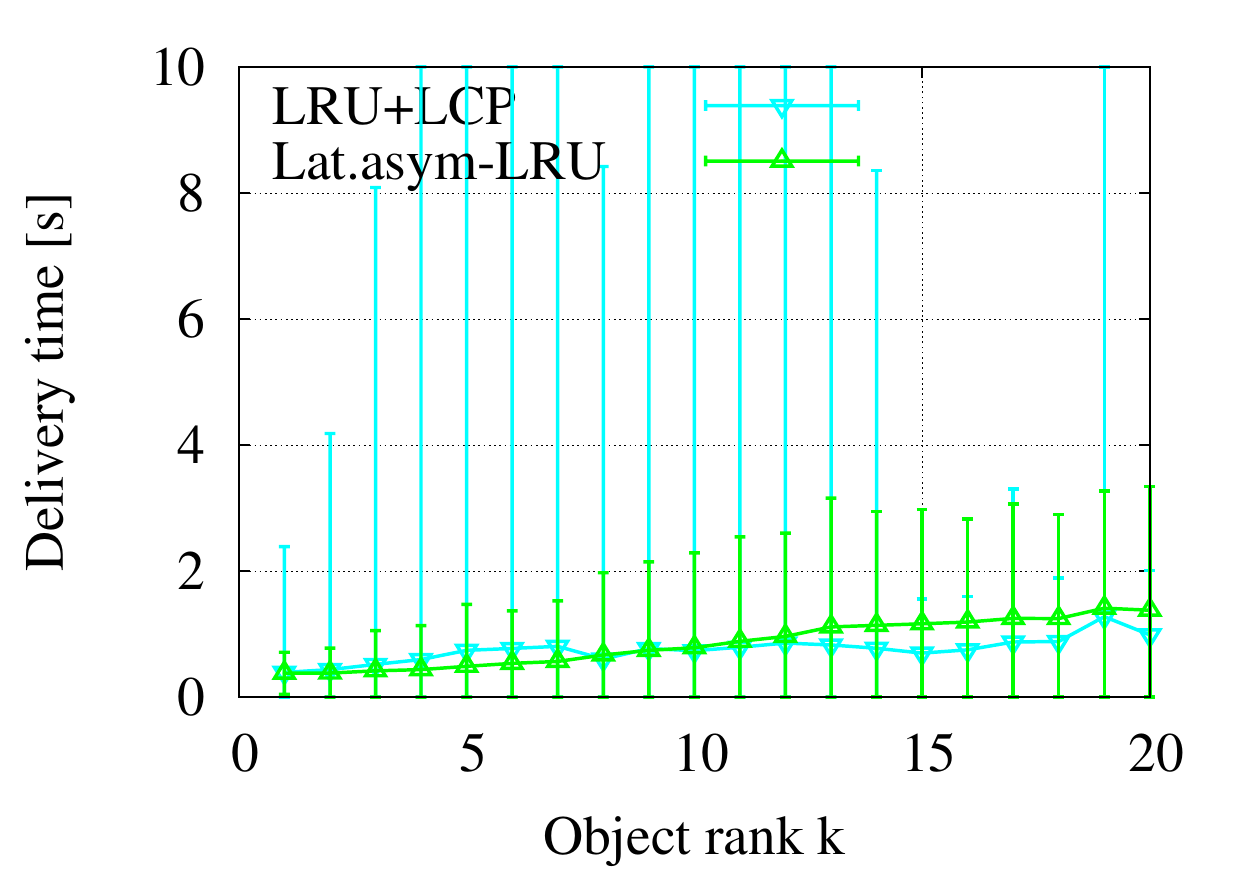}
 \label{fig:tree_delivery_time_LCP}}
 \subfigure[Evolution of the mean delivery time]{\includegraphics[width=0.32\textwidth]{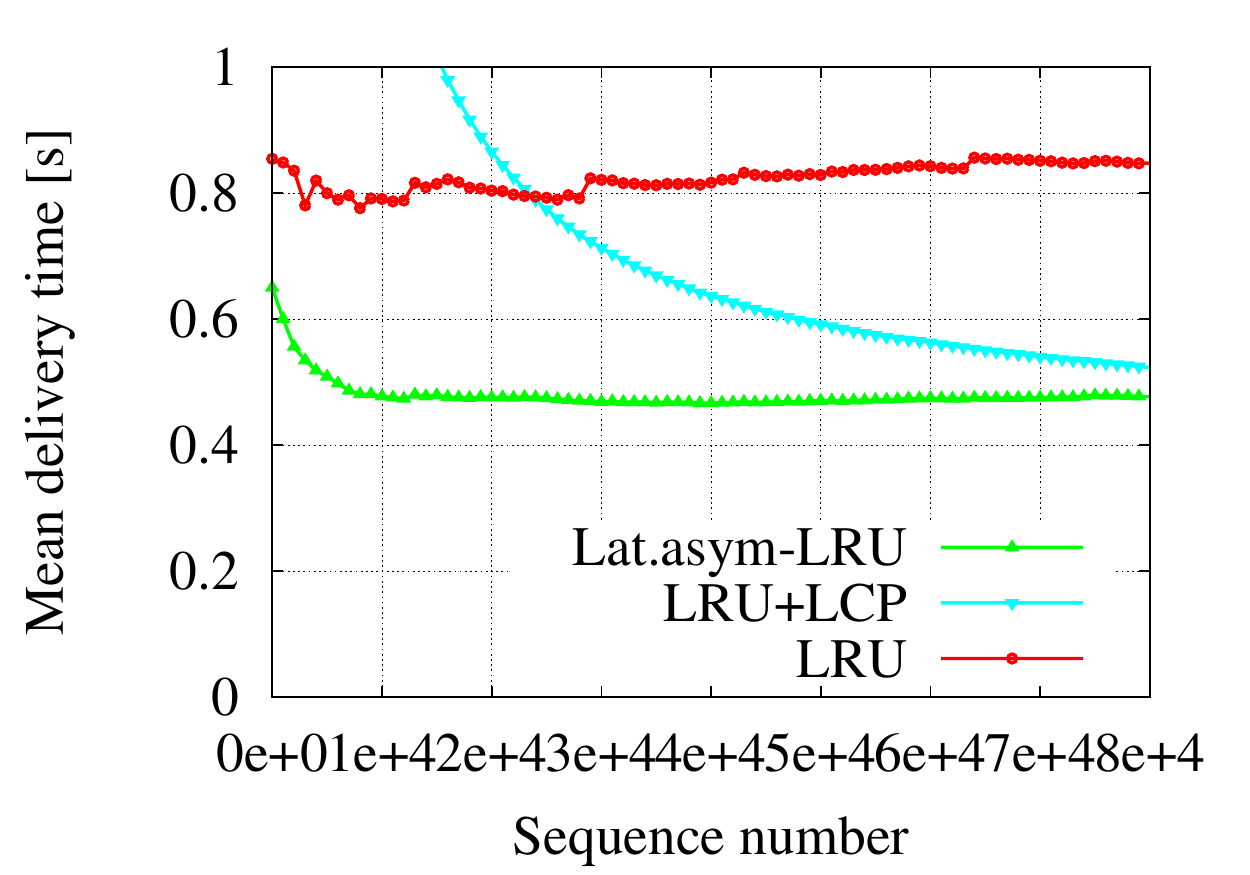}
 \label{fig:tree_delivery_time_mean_vs_seqnumber}} 
 \subfigure[Evolution of the delivery time standard deviation]{\includegraphics[width=0.32\textwidth]{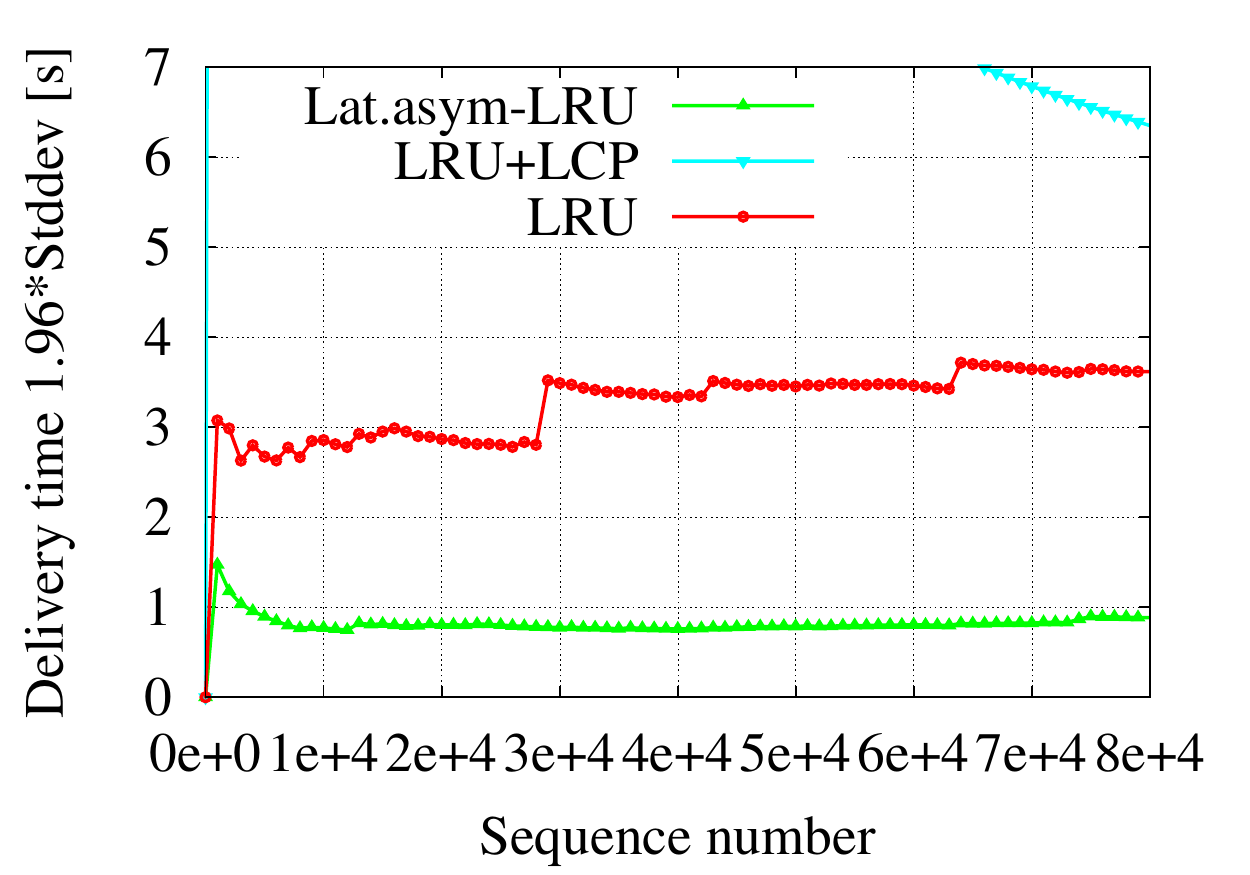}
 \label{fig:tree_delivery_time_stddev_vs_seqnumber}}
 \caption{Tree topology simulation:  LAasym decreases LRU delivery time by 30\% and outperforms LRU+LCP on convergence.}
 \label{tree_topology_results}
 \end{figure*}
 
Finally, we observe as a general rule that implementing $LAasym$ decreases the
overall cache miss probability i.e. the probability that all solicited caches fail to serve the
requested object. It also decreases and stabilizes the overall object delivery time. Indeed, the mean delivery time and the 95\% confidence interval around the average, both decrease by up to 50\%. 

Note also that this overall improvement is not achieved to the detriment of the convergence speed, unlike LRU+LCP. The latter, indeed, exhibits tremendously slow convergence and extremely high delivery time standard deviation.

\section{Conclusion and future work}
\label{sec:conclusion}
In the paper, we showed the benefits of leveraging latency for caching decisions in ICN and proposed LAC, a latency-aware cache management policy that bases cache insertion decisions on measurements of residual latency over time on a per-object basis. 
While keeping the same low complexity as standard LRU with probabilistic cache insertion, it provides a finer-granular differentiation of content in terms of expected residual latency. 
Two main advantages have been demonstrated: (i) superior performance in terms of realized delivery time at the end-user plus maximum and average link load reduction, when compared to classical LRU and probabilistic caching approaches; (ii) faster convergence w.r.t. probabilistic caching approaches along with reduced standard deviation. 

We leave for future work a thorough characterization of LAC dynamics, especially in a network of caches, where the coupling with hop-by-hop forwarding may be addressed through a joint optimization. 

The sensitivity to variations in network conditions and routing will also be investigated to highlight the benefit in terms of self-adaptiveness of a measurement-based approach w.r.t. classical latency-insensitive approaches.

\bibliographystyle{abbrv}
\bibliography{references}

\end{document}